\newtheorem{theorem}{Theorem}
\newtheorem{corollary}{Corollary}
\newtheorem{remark}{Remark}
\newtheorem{proposition}{Proposition}
\newtheorem{assumption}{Assumption}
\def\BibTeX{{\rm B\kern-.05em{\sc i\kern-.025em b}\kern-.08em
    T\kern-.1667em\lower.7ex\hbox{E}\kern-.125emX}}
\begin{document}

\newcommand*{\Scale}[2][4]{\scalebox{#1}{$#2$}}%

%
\title{Online Pricing Incentive to Sample Fresh Information}
%
%
%
%

\author{Hongbo~Li,
        and~Lingjie~Duan,~\IEEEmembership{Senior~Member,~IEEE}
\thanks{Hongbo Li and Lingjie Duan are with the Pillar of Engineering Systems and Design, Singapore University of Technology and Design, Singapore 487372 (e-mail: hongbo\_li@mymail.sutd.edu.sg; lingjie\_duan@sutd.edu.sg).}}

\IEEEtitleabstractindextext{%
\begin{abstract}
\justifying
Today mobile users such as drivers are invited by content providers (e.g., Tripadvisor) to sample fresh information of diverse paths to control the age of information (AoI). However, selfish drivers prefer to travel through the shortest path instead of the others with extra costs in time and gas. To motivate drivers to route and sample diverse paths, this paper is the first to propose online pricing for a provider to economically reward drivers for diverse routing and control the actual AoI dynamics over time and spatial path domains. This online pricing optimization problem should be solved without knowing drivers' costs and even arrivals, and is intractable due to the curse of dimensionality in both time and space. If there is only one non-shortest path, we leverage the Markov decision process (MDP) techniques to analyze the problem. Accordingly, we design a linear-time algorithm for returning optimal online pricing, where a higher pricing reward is needed for a larger AoI. If there are a number of non-shortest paths, we prove that pricing one path at a time is optimal, yet it is not optimal to choose the path with the largest current AoI. Then we propose a new backward-clustered computation method and develop an approximation algorithm to alternate different paths to price over time. Perhaps surprisingly, our analysis of approximation ratio suggests that our algorithm's performance approaches closer to optimum given more paths.
\end{abstract}

\begin{IEEEkeywords}
Age of information, online multi-path pricing, polynomial-time algorithms, approximation error
\end{IEEEkeywords}}

\maketitle

\IEEEdisplaynontitleabstractindextext

%
\IEEEpeerreviewmaketitle

\section{Introduction}\label{section1}

%
%
%
%
\IEEEPARstart{T}{oday} content providers (e.g., Tripadvisor, Yelp and Google Maps) prefer not to deploy expensive dedicated sensor networks to cover the whole city or nation. Instead, they invite mobile users such as drivers to sample fresh information on diverse paths especially those infrequently visited in the past \cite{wang2016sparse}. The sampled live information on the way include air quality data, shopping promotion and location, and traffic condition \cite{wang2016sparse,asghar2014review,bhoraskar2012wolverine}. However, selfish drivers are not willing to travel through non-shortest paths to sample fresh information if there are no rewards to offset their extra costs in time and gas \cite{li2019recommending}. It is shown in \cite{figueiredo2004exploiting} that the network performance degrades significantly due to drivers' selfish routing behaviors. To leverage the power of the mobile crowd, it is critical for providers to properly offer monetary rewards to drivers to change their myopic routing decisions for sampling fresh information along diverse paths. Such incentive mechanisms should be designed in an online version to adapt to the actual variations of information freshness over paths and time. 

Recent crowdsourcing works study optimal sensing policies by recruiting mobile vehicles to sense data (e.g., \cite{xiao2017mobile,he2015high,xu2019ilocus,lai2022optimized}). For example, \cite{xiao2017mobile} takes advantage of the mobility of vehicles to provide location-based services in large scale areas. \cite{he2015high} defined spatial and temporal coverage as two metrics for crowdsourcing quality to design greedy and genetic approximation algorithm. \cite{xu2019ilocus} incentivizes a crow of vehicle drivers to sense and sample the desired target regions in one-shot. \cite{lai2022optimized} allows allocated vehicles to follow their origin and destination routes while maximize the overall sensing benefit. However, all of these studies overlook the freshness of sampled data.

To model the information freshness, \cite{kaul2012real} proposes the concept of age-of-information (AoI) before a new information update is received. Following this, both time-average and peak AoI are developed to measure the average and maximum ages in the time domain, respectively \cite{costa2014age,kosta2017age}. Numerous works have been analyzing AoI statistics and designing scheduling policies to minimize these two performance metrics  \cite{hsu2019scheduling,sun2018age,kadota2018scheduling,li2019minimizing}. For example, by formulating the AoI state updating as a Markov decision process (MDP), \cite{hsu2019scheduling} develops approximation algorithms to schedule information packets from multi-sources to end-users for minimizing the average AoI. \cite{sun2018age} proposes online scheduling policies to minimize the AoI under different metrics in multi-flow, multi-server systems. \cite{kadota2018scheduling} compares several scheduling policies to find that transmitting the packet with the largest current AoI is optimal. \cite{he2016optimal} studies link scheduling to optimize of min-max peak AoI in wireless networks. However, these works only consider the passive packet arrival process for transmission protocols design, and overlook the opportunity to design the sampling process. 

To actively sample fresh data, several works study the path planning problem in UAV-assited IoT to minimize AoI \cite{zhou2019deep,jia2019age,li2019minimizing_uav}. \cite{zhou2019deep} formulates an MDP to capture the dynamics of UAV locations and applies reinforcement learning to solve the problem. \cite{jia2019age} jointly considers energy consumption and AoI evolution to study the data acquisition problem.   However, these works only consider the fully controlled UAV to route but overlook vehicle users' selfish behaviors to disobey. As today many mobile users are invited to sample information, it is important to leverage human-involved sampling or crowdsoucing to control AoI for content providers. 

Only a few mobile crowdsourcing works have taken the economics effect of AoI into consideration for information update or reward maximization. \cite{altman2019forever} considers sampling costs for users to decide when to self-update local data, without considering any incentive design from the content providers. \cite{zhang2021pricing} studies how an information customer requests and pays for the fresh information to be updated by the source. \cite{he2021optimal} studies a two-stage game model for a fresh data market to maximize a platform's profit, where the platform provides data with different AoI to dynamically arriving users. \cite{modina2022joint} jointly designs optimal upload strategy and incentivizes users to offload data in order to control the AoI. However, these works do not study how to motivate the power of the crowd for sampling fresh data to control AoI. The most related work to this paper is \cite{wang2019dynamic}, which proposes an offline algorithm to provide sampling rewards to mobile users for controlling expected AoI. Yet this work cannot adapt to unexpected AoI change for dynamic incentive design, and it only looks at a single path instead of a road network for information sampling.

To our best knowledge, this paper is the first work to study how a content provider designs its optimal pricing to reward drivers online for diverse routing and fresh information sampling. To best adapt to the actual variations of AoI over paths and time, we formulate our online optimization problem as a stochastic dynamic program. However, we need to overcome the following technical challenges.
\begin{itemize}
    \item \emph{Incomplete information about drivers' cost and arrival pattern}: To control the actual AoI evolution in real time, our incentive pricing as compensation should be designed according to drivers' actual arrivals and extra travel costs on non-shortest paths. Yet in practice, such information are private to drivers and unavailable for the content provider when deciding the online pricing. This makes it infeasible to apply online control methods such as Hamilton-Jacobi-Bellman equations and neural networks approximation here \cite{al2008discrete,dierks2010optimal}.
    \item \emph{Curse of dimensionality in both time and spatial path domains}: The optimal online pricing should be time- and path-dependent, making the number of system states exponentially increase with the time horizon $T$ and path number $N$ in backward induction \cite{puterman2014markov,littman2013complexity,alsheikh2015markov}. Some recent work finds special feature of system states (e.g., periodicity) to simplify the backward/forward induction yet cannot apply to our AoI problem (e.g., \cite{zhou2019collaborative,xiong2020reward}).
\end{itemize}

The existing AoI works design offline scheduling policies for multi-channel network by dynamic programming (e.g., \cite{wang2019dynamic,liu2019minimizing}). There are some recent work on online scheduling, yet they assume the system space to be countable and linearly increasing with number of channels and time (e.g., \cite{sun2018age,kadota2018scheduling,bedewy2019age}). MDP techniques are widely used to model the dynamic pricing problem (e.g., \cite{fang2020dynamic,rambha2016dynamic,baktayan2022intelligent}). \cite{fang2020dynamic} discretizes the state space into the set of finite intervals to greatly reduce its size, which is not applicable in our multi-path problem with possibly unbounded AoI. \cite{rambha2016dynamic} proposes to aggregate similar states to reduce the computational times, but it cannot provide any rigorous performance analysis such as proving approximation error in the worst case. \cite{baktayan2022intelligent} applies heuristic methods to solve the MDP under completed cost information, which did not analyze the error bound or prove structural properties. Their algorithm design and performance analysis cannot apply to our online pricing problem to sample fresh information in a large-scale road network over time.  

Our paper aims to overcome the above technical challenges for online pricing design, and our key novelty and main contributions in this paper are summarized as follows.
\begin{itemize}
    \item \emph{Novel online pricing to sample fresh information over time and space:} To motivate the power of the mobile crowd, this paper is the first to propose online pricing incentive for a content provider to economically reward drivers to sample fresh information through diverse routing. We optimize pricing reward on human-involved sampling to affect the actual AoI dynamics over time and spatial paths, without knowing drivers' hidden extra travel costs and even arrivals.
    \item \emph{Linear complexity algorithm to solve online pricing for one non-shortest path:} To minimize the state space of the our formulated MDP problem, we exploit the dynamics of AoI feature to jointly apply a fixed look-up table to greatly simplify the backward induction process. Then we design a linear-time algorithm for returning optimal online pricing. We prove structural properties of our unique pricing solution, and show that a higher pricing reward is needed for a larger AoI. We also show that our algorithm is applicable to infinite time horizon.  
    \item \emph{Approximation algorithm to solve online pricing for multiple non-shortest paths:} We propose a new backward-clustered computation method to overcome the curse of dimensionality in the path number. We first prove that it is optimal to only price one path at a time, while it is not optimal to myopically choose the path with the largest current AoI. Based on the backward-clustered method, we develop a new approximation algorithm to alternate different paths to price over time. This algorithm has only polynomial-time complexity, and its complexity does not depend on the number of paths. Perhaps surprisingly, our analysis of approximation ratio suggests that this algorithm's performance approaches closer to the optimum if more paths are involved to sample.
\end{itemize}

The rest of the paper is organized as follows. In Section \ref{section2},
we overview the system model to sample fresh information and introduce the problem formulation for online pricing. In Sections \ref{section3} and \ref{section4}, we prove structural properties of our unique pricing solution, and then propose a linear-time algorithm to return optimal online pricing if there is only one non-shortest path. In Section \ref{section5}, we develop a new approximation algorithm to alternate different paths to price over time if there are a number of non-shortest paths, which has only polynomial-time complexity and does not depend on the path number. Finally, we conclude this paper in Section \ref{section6}. 

\section{System Model and Problem Formulation}\label{section2}

As illustrated in Fig. \ref{Single_Path}, a random flow of drivers sequentially arrive at the gateway X in the discrete time horizon of in total $T$ rounds. At any time slot $t = 0,\cdots, T$, a driver (if any) needs to make routing decision from X to destination Y. Here, the road network includes the shortest path with normalized zero travel cost and another distant path with extra travel delay $D$ in the upper part of Fig. \ref{Single_Path}. For ease of exposition, we first focus on this case with only one non-shortest path, and then extend our analysis and pricing design to an arbitrary number of non-shortest paths later in Section \ref{section5}. 

\begin{figure*}[htbp]
    \centering
    \includegraphics[width=0.9\textwidth]{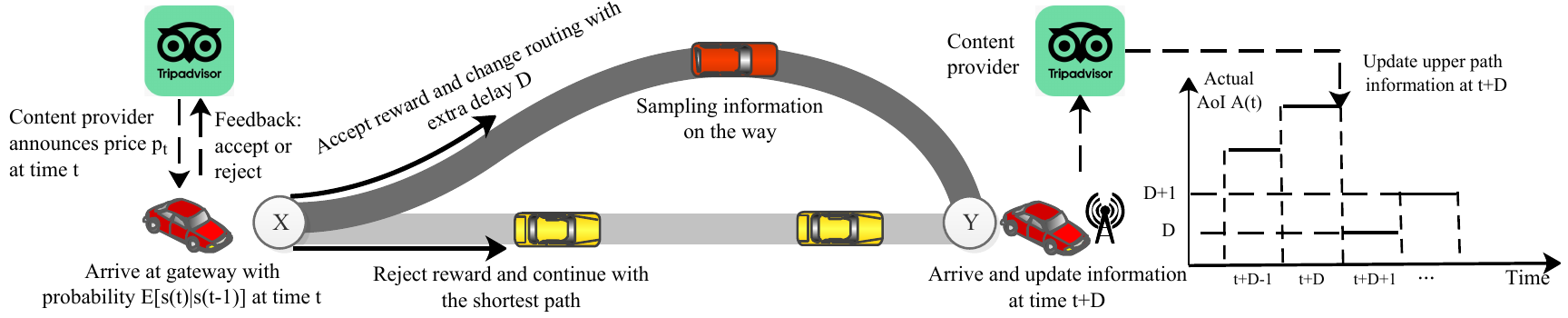}
    \caption{System model of pricing-driven sampling, where a random flow of drivers sequentially arrive at the gateway X and make routing decisions from X to destination Y. The road network includes the shortest path with normalized zero travel cost and another distant upper path with extra travel delay $D$.}
    \label{Single_Path}
\end{figure*}

Selfish drivers hesitate to choose the non-shortest upper path to incur extra travel cost in time and gas, resulting in under-sampling of this path (e.g., to collect air quality data, shopping promotion and location, or traffic condition \cite{wang2016sparse,asghar2014review,bhoraskar2012wolverine}). There are always enough drivers to cover the shortest path and we do not need to consider the AoI there. Our problem is how to motivate the randomly arriving drivers to sample the non-shortest path to control AoI there. At each time $t$, the content provider observes the actual AoI $A(t)$ of the non-shortest path (see Fig. \ref{Single_Path}), and compensates an arrived driver at the gateway with pricing reward $p_t$ to possibly change his routing to the upper path to return the sampled information of the whole path at time $t+D$. 

At each time slot $t$, the content provider can predict the ongoing AoI evolution till $t+D$, and its pricing decision needs to be adaptive to foreseeable AoI evolution in set \[\mathcal{A}(t, t+D)=\{A(\tau) |t\leq \tau\leq t+D\}.\] We can thus rewrite price $p_t$ at time $t$ as $p_t(\mathcal{A}(t, t+D))$ or simply $p_t(A(t+D))$. Note that the content provider will not decide any price after time $T-D$, as a driver can no longer return the sampled information timely before the end time $T$. 

Next we first introduce the driver's model and then present the online pricing problem formulation.

\subsection{Driver's arrival and cost model for sampling}
Following the traffic control literature (e.g., \cite{li2019recommending,yu2003short}), we model drivers' random arrivals at the gateway X over time as a Markov chain. As shown in Fig. \ref{event}, if a driver arrives at the beginning of time slot $t$, we denote it as $s(t)=1$, and otherwise $s(t)=0$. Each time slot's duration is set small enough such that there is at most one arrival at a time, and we practically model the correlation between arrivals across neighboring time slots. That is, if $s(t)=1$, 
\begin{equation}\notag
    s(t+1)=\begin{cases}
    1, &\text{with probability }1-\beta,\\
    0, &\text{with probability } \beta.
    \end{cases}
\end{equation}
Similarly, if $s(t)=0$, we update $s(t+1)$ by replacing the two probabilities $1-\beta$ and $\beta$ above by $\alpha$ and $1-\alpha$.
At the beginning of time slot $t$, the content provider only knows past arrival information $s(t-1)$ in Fig. \ref{event}, and expects arrival probability during time slot $t$ by using the Markov chain, i.e.,
\begin{equation}
    \mathbb{E}\big[s(t)|s(t-1)\big]=s(t-1)(1-\beta)+\big(1-s(t-1)\big)\alpha.\label{Est}
\end{equation}

\begin{figure}[htbp]
\centering
\includegraphics[width=0.36\textwidth]{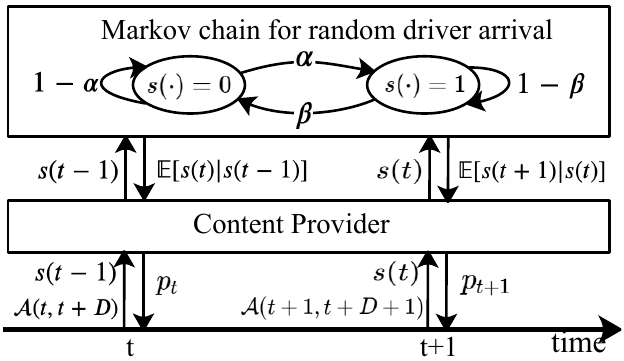}
\caption{Information observation and decision process by the online provider.}
\label{event}
\end{figure}

Besides the random arrival process, we consider the challenging incomplete information scenario that the content provider does not know each driver's actual cost when deciding pricing $p_t(A(t+D))$ before observing $s(t)$. A driver has extra cost to travel on the non-shortest path with delay $D$, and different drivers have different cost sensitivities. Let $x\in[0,1]$ be the normalized cost sensitivity of a driver, then we model its cost as $\text{cost}=xD$, which is proportional to the delay with individual cost sensitivity $x$. Thus, a driver with greater $x$ is more cost sensitive and less likely to accept the price offer to change route. Yet, the provider only knows that $x$ in a normalized range $[0,1]$ randomly follows cumulative distribution function (CDF) $F(\cdot)$, which can be obtained by fitting different historical data's frequencies into a histogram and converting it to CDF as \cite{meijer2006measuring}. The maximum travel cost is thus $D$ and the content provider will not decide any pricing reward $p_t$ beyond $D$ to over-pay any driver, and we expect
\begin{equation*}
    0\leq p_t\big(A(t+D)\big)\leq D.
\end{equation*}

The cost sensitivity distribution in certain applications is considered to be truncated normal or logistic random distribution \cite{meijer2006measuring,kim2010modeling}, which are used for our simulations later. In this paper, we assume that the i.i.d. random distributions $F(\cdot)$ of drivers' cost sensitivities satisfies the following assumption, which is the regular value distribution widely assumed in the mechanism design literature \cite{ewerhart2013regular}. 

\begin{assumption}\label{assumption1}
Assume that the following function
\begin{equation}
    H(x):=x+\frac{F(x)}{F'(x)}\label{H(x)}
\end{equation}
increases monotonically in $x\in[0,1]$, where $F(x)$ and $F'(x)$ are the CDF and PDF of a driver's random cost sensitivity. 
\end{assumption}

According to \cite{ewerhart2013regular}, the second term $\frac{F(x)}{F'(x)}$ of (\ref{H(x)}) is the complementary hazard rate and the monotonicity of $H(x)$ tells the Myerson's regularity. Assumption \ref{assumption1} is not strong and it is satisfied by multiple common distributions such as uniform, exponential and logistic distributions. Later in Section \ref{section3} we will relax Assumption \ref{assumption1} and extend our analysis and results to some other distributions (e.g., truncated normal distribution) in Corollary \ref{corol:distribution}.

\subsection{Online Pricing Problem Formulation}  \label{section3b}
Upon arrival at time $t$ with $s(t)=1$, a driver observes the current pricing reward $p_t(A(t+D))$. He decides to accept this offer or not, by checking if his utility
\begin{equation}
    U_t\big(A(t+D)\big)=p_t\big(A(t+D)\big)-xD \label{Ut}
\end{equation}
of travelling on the non-shortest path is positive or not. If $s(t)=1$ and $U_t\big(p_t(A(t+D))\big)\geq 0$ at time $t$, the driver accepts the offer at the gateway X, and updates the whole path's information after travel delay $D$, helping reduce the AoI $A(t+D)$ of this path to $D$, as shown in the right part of Fig. \ref{Single_Path}. Otherwise, the next foreseen AoI $A(t+D+1)$ at time $t+D+1$ increases from $A(t+D)$ by one slot to $A(t+D)+1$. Thus, the dynamics of the actual AoI is given as
\begin{equation}
    A(t+D+1)=
    \begin{cases}
    D, &\text{if}\ U_t\big(A(t+D)\big)\geq 0\\& \text{and}\ s(t)=1;\\
    A(t+D)+1,& \text{otherwise,}\label{A(t+D+1)}
    \end{cases}
\end{equation}
where $U_t\big(A(t+D)\big)$ is defined in (\ref{Ut}), and the pricing reward is accepted in the former case with probability 
\begin{equation}
    Q(t)=\mathbb{E}[s(t)|s(t-1)]F\left(\frac{p_t(A(t+D))}{D}\right) \label{QD_t}
\end{equation}
with $\mathbb{E}[s(t)|s(t-1)]$ in (\ref{Est}) from the content provider's point of view. The expectation of the final payment to the possibly arrived driver at time $t$ is thus $Q(t)p_t(A(t+D))$, and the online pricing $p_t(A(t+D))$ needs to well balance the next AoI evolution in (\ref{A(t+D+1)}) and the expected payment. 

Finally, we are ready to formulate the provider's online pricing problem in Fig. \ref{event} by the Markov decision process (MDP) techniques with four components \cite{puterman2014markov,hsu2017age} to best adapt to the actual AoI evolution $A(t+D)$ and past arrival observation $s(t-1)$ under the memoryless property. 
\begin{itemize}
    \item \textbf{States:} We define the state of the MDP in slot $t$ by the tuple $\mathbf{S}_t=(A(t+D),s(t-1))$. Note that $s(-1)=0$ at initial time slot $t=0$.
    \item \textbf{Actions:} The action of the MDP in slot $t$ is the price $p_t(A(t+D))\in[0,D]$. Note that the continuous action space size is infinite.
    \item \textbf{Transition probabilities:} According to the conditional arrival probability in (\ref{Est}) and AoI dynamics in (\ref{A(t+D+1)}), there can be three possible states at the next time slot $t+1$. The path will be sampled by an arrival driver with probability $Q(t)$ in (\ref{QD_t}). If there is no arrival at time $t$, the state $\mathbf{S}_{t+1}$ at $t+1$ will be $(A(t+D)+1,0)$ with probability
    \begin{equation}
        Q_0(t)=1-\mathbb{E}[s(t)|s(t-1)]. \label{Q0_t}
    \end{equation}
    If there is an arrival at time $t$ but the driver does not accept the price, the state $\mathbf{S}_{t+1}$ at $t+1$ will be $(A(t+D)+1,1)$ instead with probability
    \begin{equation}
        Q_1(t)=\mathbb{E}[s(t)|s(t-1)]\bigg(1-F\Big(\frac{p_t(A(t+D))}{D}\Big)\bigg). \label{Q1_t}
    \end{equation} 
    In summary, we can obtain the following state transitions of $\mathbf{S}_{t+1}$:
    \begin{equation}
    \!\!\!\!\!\!\!
        \begin{cases}
            \bar{\mathbf{S}}_{t+1}=(D,1),&\text{with probability } Q(t) \text{ in } (\ref{QD_t}),\\
            \mathbf{S}_{t+1}^{0}=(A(t+D)+1,0),&\text{with probability } Q_0(t) \text{ in } (\ref{Q0_t}),\\
            \mathbf{S}_{t+1}^{1}=(A(t+D)+1,1),&\text{with probability } Q_1(t)\text{ in } (\ref{Q1_t}),
        \end{cases}\label{transition_prob}
    \end{equation}
    which correspond to the three outcomes: arrival to sample, no current arrival, and arrival to not sample.
    \item \textbf{Cost:} Let $V\big(\mathbf{S}_t,p_t(A(t+D))\big)$ be the immediate cost of the MDP if action $p_t(A(t+D))$ is taken in slot $t$ under state $\mathbf{S}_t$, which is defined as the summation of the actual AoI and expected economic payment:
    \begin{equation}
    \!\!\!\!\!\!
        V\big(\mathbf{S}_t,p_t(A(t+D))\big) =A(t+D)+Q(t) p_t(A(t+D)).\label{cost_V}
    \end{equation}
\end{itemize}
Considering the discount factor $\rho\in(0,1)$ under discrete time horizon, the objective of the MDP is to find the optimal pricing function $p_t(A(t+D))$ at current time slot $t$ that minimizes the long-term total $\rho$-discounted cost:
\begin{equation}
    \begin{aligned}
        &C_t^*\big(\mathbf{S}_t\big) =\min_{p_t(A(t+D))\in[0,D]} \sum_{\tau=t}^{T} \rho^{\tau-t} V\Big(\mathbf{S}_t,p_t\big(A(t+D)\big)\Big),\\
        &\quad \quad\quad \quad\quad \quad \ s.t.\quad \quad \ \  (\ref{Est}),(\ref{Ut}),(\ref{A(t+D+1)})\ \text{and}\ (\ref{cost_V}),
    \end{aligned}\label{MDP_formulation}
\end{equation}
which is a non-convex problem due to the non-convex AoI dynamics constraint. The current price $p_t(A(t+D))$ to announce affects the dynamics of AoI since $t+1$. Note that the current pricing decision can only affect the AoI after a time delay $D$, thus like $p_t(A(t+D))$ we also use the foreseen AoI $A(t+D)$ in the cost objective function $C_t^*\big(\mathbf{S}_t\big)$ in (\ref{MDP_formulation}) above.

\subsection{Characterization of Dynamic Program}

For any time $t=\{0,\cdots,T-D\}$ in the finite time horizon, problem (\ref{MDP_formulation}) can be written as \cite{puterman2014markov}:
\begin{equation}
\begin{aligned}
   &\Scale[0.94]{C_t^*\big(\mathbf{S}_t\big)}=\min_{p_t(A(t+D))}\Scale[0.94]{V\Big(\mathbf{S}_t,p_t\big(A(t+D)\big)\Big) +\rho\mathbb{E}_{\mathbf{S}_{t+1}}\big[C^*_{t+1}\big(\mathbf{S}_{t+1}\big)\big]},
\end{aligned}\label{HJB}
\end{equation}
where the cost-to-go is
\begin{align*}
    \mathbb{E}_{\mathbf{S}_{t+1}}\big[C^*_{t+1}\big(\mathbf{S}_{t+1}\big)\big]=\sum_{j=0}^1Q_j(t) C_{t+1}^*\big(\mathbf{S}_{t+1}^{j}\big)+Q(t) C_{t+1}^*\big(\bar{\mathbf{S}}_{t+1}\big)
\end{align*}
according to the transition probabilities (\ref{transition_prob}).

The optimal pricing satisfies the first-order necessary condition of (\ref{HJB}) with respect to $p_t(A(t+D))$, i.e.,
\begin{equation}
\begin{aligned}
     &\frac{p_t^*(A(t+D))}{D}+\frac{F\big(\frac{p_t^*(A(t+D))}{D}\big)}{F'\big(\frac{p_t^*(A(t+D))}{D}\big)}- \frac{\rho}{D} \Delta C_{t+1}^*=0,\label{u*t}
\end{aligned}
\end{equation}
where 
\begin{equation}
    \Delta C_{t+1}^*=C_{t+1}^*\big(\mathbf{S}_{t+1}^{1}\big)-C_{t+1}^*\Big(\bar{\mathbf{S}}_{t+1}\Big).\label{delta_C}
\end{equation}
To solve (\ref{u*t}), we need the inputs of  $C_{t+1}^*\big(\mathbf{S}_{t+1}^{1}\big)$ and $C_{t+1}^*\big(\bar{\mathbf{S}}_{t+1}\big)$. For each long-term cost function since time $t$, it takes $O\big(\log_2\big(\frac{D}{\varepsilon}\big)\big)$ complexity to obtain its pricing solution to (\ref{u*t}) using binary search with error $\varepsilon$ \cite{knuth1971optimum}. 

Lacking the information about drivers' hidden sampling costs and even arrivals, the transition probabilities are related with the pricing $p_t^*(A(t+D))$ in (\ref{HJB}). Therefore, it is infeasible to apply online control methods such as Hamilton-Jacobi-Bellman equations and neural networks approximation here \cite{al2008discrete,dierks2010optimal,lewis2012optimal}.

\begin{table}[!t]
\renewcommand{\arraystretch}{1.3}
\caption{Key notations used in the paper}
\label{notation_table}
\centering
\begin{tabular}{c|m{0.33\textwidth}}
\hline
\textbf{Notation} & \textbf{Definition}\\
\hline
\hline
$D$ & The extra travel delay of the single non-shortest path\\
\hline
$T$ & The time horizon\\
\hline
$A(t+D)$ & The foreseen AoI with the travel delay $D$ at the beginning of time slot $t$\\
\hline
$p_t(\cdot)$ & The pricing reward to compensate an arrived driver at time $t$\\
\hline
$s(t-1)$ & Driver's arrival information of the last time slot $t-1$\\
\hline
$\mathbb{E}[s(t)|s(t-1)]$ & The conditional probability of a driver's arrival during time slot $t$.\\
\hline
$x$ & The cost sensitivity of a driver\\
\hline
$F(\cdot)$ & The CDF of a driver's random cost sensitivity\\
\hline
$U_t(\cdot)$ & The utility of a driver at time slot $t$\\
\hline
$\mathbf{S}_t$ & The state $(A(t+D),s(t-1))$ at time $t$ in the single path problem\\
\hline
Q(t) & Transition probabilities at time slot $t$ in MDP\\
\hline
$V(\cdot)$ & The immediate cost of the MDP\\
\hline
$\rho$ & The discount factor\\
\hline
$C_t(\cdot)$ & The $\rho$-discounted long-term cost function at time $t$\\
\hline
$\varepsilon$ & The error of binary search\\
\hline
$\mathbf{C}^*(\bar{\mathbf{S}})$ & The look-up table to store cost function $C_t^*(\bar{\mathbf{S}}_t)$ for any $t$ \\
\hline
$N$ & The total number of non-shortest paths\\
\hline
$D_i$ & The travel delay of the $i$-th non-shortest path\\
\hline
$\hat{i}_{t}$ & The index of the path with the maximum net payoff at time slot $t$\\
\hline
$\mathbb{A}(t)$ & The foreseen AoI set of all the non-shortest paths at the beginning of time slot $t$\\
\hline
$\mathbb{S}_t$ & The state $(\mathbb{A}(t),\hat{i}_{t-1},s(t-1))$ at time $t$ in the multi-path problem\\
\hline
\end{tabular}
\end{table}

As it is too late to price after $t=T-D$ for returning timely information before end-time $T$, we obtain from (\ref{HJB}) that 
\begin{equation}
    C_{T-D}^*\big(\mathbf{S}_{T-D}\big)=A(T).\label{VT}
\end{equation}
Based on the state transitions in (\ref{transition_prob}) and equation (\ref{VT}), we find that the state space increases polynomially with time horizon $T$ in (\ref{HJB}) for the single-path pricing problem. Thus, one can use backward induction to solve the problem (\ref{HJB}) in polynomial time \cite{puterman2014markov,zhou2019collaborative}, which may be not small for a large $T$. Thus, we will further exploit the unique AoI feature and propose Algorithms \ref{lookup_table} and \ref{backward_computation} to reduce to linear complexity later in Section \ref{section4}. On the other hand, we will show later in Section~\ref{section5} that the state space increases exponentially in $T$ and path number $N$, which forbids to apply traditional MDP methods \cite{bellman1966dynamic}.

Before that, we first prove key structural properties of the optimal (long-term) cost functions in (\ref{HJB}) and optimal online pricing in next section. For ease of reading, we summarize all the key notations in Table \ref{notation_table}.

\section{Proved Structural Properties of Optimal Online Pricing}\label{section3}
Since the price $p_t\big(A(t+D)\big)$ is located in a compact interval $[0,D]$, there always exists optimal pricing solutions to the problem (\ref{HJB}). However, since drivers' information about costs and random arrivals are incomplete, the monotonicity of cost function with respect to $A(t+D)$ and the uniqueness of optimal pricing solution to (\ref{u*t}) cannot be guaranteed \cite{kakumanu1971continuously} and \cite{puterman2014markov}. In this section, at any time $t$, we first examine the monotonic relationship of cost function and optimal pricing with respect to the foreseen AoI $A(t+D)$. Then we prove the uniqueness of optimal online solution to problem (\ref{HJB}). 

By observing (\ref{u*t}), we find the optimal pricing solution $p^*_t\big(A(t+D)\big)$ is determined by the the cost functions $C_{t+1}^*\big(\mathbf{S}_{t+1}^{1}\big)$ without information update, $C_{t+1}^*(\bar{\mathbf{S}}_{t+1})$ with update, and the CDF $F(\cdot)$. To examine the optimal pricing's relationship with $A(t+D)$, we first need to examine the relationship between cost function $C^*_t\big(\mathbf{S}_{t}\big)$ in (\ref{HJB}) and foreseen AoI $A(t+D)$ \cite{puterman2014markov}.

\begin{proposition}\label{C_increas}
The optimal cost function $C^*_t\big(\mathbf{S}_t\big)$ in (\ref{HJB}) at any time $t\in\{0,1,\cdots,T-D\}$ increases monotonically with the foreseen AoI $A(t+D)$.
\end{proposition}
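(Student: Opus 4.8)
The plan is to prove the claim by \emph{backward induction on the time index} $t$, running from the terminal slot $t=T-D$ down to $t=0$, carrying the induction hypothesis that $C^*_{t+1}(\mathbf{S}_{t+1})$ is nondecreasing in the AoI coordinate of its state while the arrival flag $s(t)$ is held fixed. First I would dispatch the base case: the terminal identity (\ref{VT}) gives $C^*_{T-D}(\mathbf{S}_{T-D})=A(T)$, which is manifestly increasing in $A(T)=A((T-D)+D)$, so the property holds at the last decision slot.

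For the inductive step, fix the flag $s(t-1)$ and take two foreseen AoI values $A_1<A_2$. The crucial structural observation, read off from (\ref{cost_V}) and (\ref{transition_prob}), is that \emph{neither} the immediate payment term $Q(t)\,p_t$ \emph{nor} the three transition probabilities $Q(t),Q_0(t),Q_1(t)$ depend on the current AoI $A(t+D)$: the AoI enters the Bellman recursion (\ref{HJB}) only through the additive term $A(t+D)$ in the immediate cost and through the next-state AoI, which equals $A(t+D)+1$ in the two non-sampling branches $\mathbf{S}^0_{t+1},\mathbf{S}^1_{t+1}$ and is reset to the constant $D$ in the sampling branch $\bar{\mathbf{S}}_{t+1}$. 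I would exploit this by a coupling argument: let $p^\star$ attain the minimum in (\ref{HJB}) for the larger state $(A_2,s(t-1))$, and evaluate the \emph{same} $p^\star$ at the smaller state $(A_1,s(t-1))$, which is feasible since the action set $[0,D]$ does not depend on the state. Under the common price $p^\star$ the three transition probabilities coincide for both states, and the sampling branch lands in the same reset state $(D,1)$; hence the induction hypothesis applies term-by-term to give $C^*_{t+1}(A_1+1,\cdot)\le C^*_{t+1}(A_2+1,\cdot)$, so the discounted cost-to-go at $A_1$ is no larger than at $A_2$, while the immediate cost differs by exactly $A_1-A_2<0$.

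Chaining these inequalities closes the induction: $C^*_t(A_1,s(t-1))$ is bounded above by the value of the objective at the feasible price $p^\star$ for $A_1$, which is in turn bounded above by the value at $p^\star$ for $A_2$, which equals $C^*_t(A_2,s(t-1))$ by optimality of $p^\star$ at the larger state. Because the immediate term contributes a strict gap $A_2-A_1>0$, this actually upgrades the weak inequality to strict monotonicity, matching the wording of the statement.

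I expect the main difficulty to be bookkeeping rather than conceptual. The two points to nail down are (a) stating the hypothesis over the correct coordinate while freezing $s(t-1)$, so that the reset branch and the non-sampling branches are compared consistently, and (b) justifying that the optimizer $p^\star$ can be transplanted between the two states without perturbing $Q(t),Q_0(t),Q_1(t)$ — this transplantation is precisely where the AoI-independence of the transition law (\ref{transition_prob}) is essential, and it is the step that makes ``the minimum of state-monotone objectives over a common action set is itself monotone'' go through cleanly. Note that because the sampling branch severs the AoI dependence (resetting to $D$), no contraction or fixed-point machinery is required: the finite-horizon recursion suffices, and the infinite-horizon case then follows from the limiting argument already promised for the infinite time horizon.
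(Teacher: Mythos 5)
Your proposal is correct and follows essentially the same route as the paper's own proof: backward induction anchored at the terminal identity (\ref{VT}), with the key step of transplanting the optimal price of the larger-AoI state to the smaller-AoI state (feasible since the action set and transition probabilities in (\ref{transition_prob}) do not depend on $A(t+D)$), then comparing term-by-term via the induction hypothesis. Your explicit remarks on freezing $s(t-1)$ and on the reset branch landing in the common state $(D,1)$ merely spell out details the paper leaves implicit.
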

\begin{proof}
Suppose that there are two states $\mathbf{S}_t^a$ and $\mathbf{S}_t^b$ with their actual AoI satisfying $A^a(t+D)> A^b(t+D)$ at any time slot $t\in\{0,\cdots, T-D\}$. In the following, we apply both mathematical induction and backward induction to prove
\begin{equation}
    C^*_{t}\big(\mathbf{S}_{t}^a\big)>C^*_{t}\big(\mathbf{S}_{t}^b\big), \label{maths_induction}
\end{equation}
which can tell the monotonicity of the optimal cost function.

We first prove the base case at the last time slot $t=T-D$. According to (\ref{VT}), the two cost functions satisfy
\[
C_{T-D}^*\big(\mathbf{S}_{T-D}^{a}\big)=A^a(T)>A^b(T)=C_{T-D}^*\big(\mathbf{S}_{T-D}^{b}\big),
\]
which holds for (\ref{maths_induction}).

Next, we assume the induction hypothesis that $C_{t_0+1}^*\big(\mathbf{S}_{t_0+1}^{a}\big)>C_{t_0+1}^*\big(\mathbf{S}_{t_0+1}^{b}\big)$ is true for a particular time slot $t_0+1$. It follows to show that $C_{t_0}^*\big(\mathbf{S}_{t_0}^{a}\big)>C_{t_0}^*\big(\mathbf{S}_{t_0}^{b}\big)$ also holds.
Denote $p_{t_0}^*(A^a(t_0+D))$ and $p_{t_0}^*(A^b(t_0+D))$ to be the optimal pricing to $A^a(t_0+D)$ and $A^b(t_0+D)$, respectively. Consider another non-optimal pricing to $A^b(t_0+D)$:
\begin{equation}
    \hat{p}_{t_0}(A^b(t_0+D))=p_{t_0}^*(A^a(t_0+D)),\label{hat_p}
\end{equation}
whose corresponding non-optimal cost function must satisfy
$\hat{C}_{t_0}\big(\mathbf{S}_{t_0}^{b}\big)\geq C_{t_0}^*\big(\mathbf{S}_{t_0}^{b}\big)$.
Then we can derive the induction step: 
\begin{equation}\notag
    \begin{aligned}
        &C_{t_0}^*\big(\mathbf{S}_{t_0}^{a}\big)-C_{t_0}^*\big(\mathbf{S}_{t_0}^{b}\big)\geq C_{t_0}^*\big(\mathbf{S}_{t_0}^{a}\big)-\hat{C}_{t_0}\big(\mathbf{S}_{t_0}^{b}\big)\\
        = &A^a(t+D)-A^b(t+D)\\&+\rho \mathbb{E}_{\mathbf{S}_{t_0+1}^a}\left[C_{t_0+1}^*(\mathbf{S}_{t_0+1}^a)\right]-\rho\mathbb{E}_{\mathbf{S}_{t_0+1}^b}\left[C_{t_0+1}^*(\mathbf{S}_{t_0+1}^b)\right],
    \end{aligned}
\end{equation}
which is larger than $0$ due to the facts that $A^a(t+D)>A^b(t+D)$, $C^*_{t_0+1}\big(\mathbf{S}_{t_0+1}^a\big)\geq C^*_{t_0+1}\big(\mathbf{S}_{t_0+1}^b\big)$ and (\ref{hat_p}).

As we have proven that both the base case and the induction step are true, (\ref{maths_induction}) holds for each $t\in\{0,\cdots,T-D\}$ by mathematical induction. This further tells the monotonicity of the optimal cost function $C^*_t\big(\mathbf{S}_t\big)$ in (\ref{HJB}).
\end{proof}

Intuitively, a larger foreseen AoI $A(t+D)$ leads to larger long-term cost. 
Based on the monitonicity of the long-term cost function in Proposition \ref{C_increas}, next we are ready to prove the uniqueness of optimal online pricing solution to problem (\ref{HJB}) at any time $t$.
\begin{proposition}\label{uniquesol_prop}
Under Assumption \ref{assumption1}, the optimal online pricing solution $p_t^*(A(t+D))$ to problem (\ref{HJB}) exists and is unique at any time $t\in\{0,\cdots,T-D\}$.
\end{proposition}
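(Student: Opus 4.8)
The plan is to first dispatch existence and then reduce uniqueness to the strict monotonicity of $H(\cdot)$ granted by Assumption~\ref{assumption1}. For existence I would simply note that the objective of (\ref{HJB}) is continuous in $p_t\big(A(t+D)\big)$ over the compact interval $[0,D]$ (since $F$ and $F'$ are continuous), so the extreme value theorem immediately yields a minimizer. Throughout it is convenient to work with the normalized variable $y:=p_t\big(A(t+D)\big)/D\in[0,1]$, so that the whole argument plays out on $[0,1]$.

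For uniqueness, the crucial preliminary observation is that the next-step cost gap $\Delta C_{t+1}^*$ in (\ref{delta_C}) does \emph{not} depend on the current action $p_t$: both $\mathbf{S}_{t+1}^{1}=(A(t+D)+1,1)$ and $\bar{\mathbf{S}}_{t+1}=(D,1)$ are fixed states whose optimal costs-to-go are already pinned down by backward induction, regardless of the price announced now. Writing $E:=\mathbb{E}[s(t)|s(t-1)]$ and substituting (\ref{QD_t})--(\ref{cost_V}) into (\ref{HJB}), I would differentiate the resulting objective $J(y)$ and collect the three branch terms to reach the factorization
\begin{equation}\notag
    J'(y)=E\,D\,F'(y)\left[H(y)-\frac{\rho}{D}\,\Delta C_{t+1}^*\right],
\end{equation}
which is precisely the first-order condition (\ref{u*t}) set to zero, but now displayed as a product.

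Uniqueness then follows structurally rather than from convexity (indeed (\ref{MDP_formulation}) is non-convex). Since $E>0$ and $F'(y)>0$ on $(0,1)$, the sign of $J'(y)$ coincides with the sign of $H(y)-\frac{\rho}{D}\Delta C_{t+1}^*$. By Assumption~\ref{assumption1} the map $H$ is strictly increasing, so this difference crosses zero at most once, switching from negative to positive; hence $J$ is strictly decreasing and then strictly increasing on $[0,1]$, i.e.\ strictly unimodal, and therefore has a unique minimizer. If the stationary point $y^*$ solving $H(y^*)=\frac{\rho}{D}\Delta C_{t+1}^*$ lies in $(0,1)$ it is the unique optimum; if $\frac{\rho}{D}\Delta C_{t+1}^*$ exceeds $H(1)$ the minimizer is pinned uniquely at the boundary $y=1$ (price $D$), and the symmetric case would give $y=0$. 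Proposition~\ref{C_increas} in fact gives $\Delta C_{t+1}^*>0$ because $A(t+D)+1>D$, which rules out the $y=0$ boundary whenever $F(0)=0$.

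The main obstacle I anticipate is exactly the non-convexity flagged right after (\ref{MDP_formulation}): one cannot invoke a standard ``convex objective implies unique minimizer'' argument. The work-around is to establish unimodality through the Myerson-regularity monotonicity of $H$, and the enabling step is recognizing that $\Delta C_{t+1}^*$ is constant with respect to the current price, so that $J'$ factors cleanly and inherits its single sign change directly from $H$.
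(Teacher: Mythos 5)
Your proposal is correct and follows essentially the same route as the paper's proof: both normalize $y=p_t\big(A(t+D)\big)/D$, reduce uniqueness to the single zero-crossing of $H(y)-\tfrac{\rho}{D}\Delta C^*_{t+1}$ (using Assumption~\ref{assumption1} for the monotonicity of $H$ and Proposition~\ref{C_increas} to get $\Delta C^*_{t+1}>0$, hence a negative value at $y=0$), and both resolve the case $H(1)<\tfrac{\rho}{D}\Delta C^*_{t+1}$ by pinning the optimum at the boundary price $p_t^*=D$. Your explicit factorization $J'(y)=E\,D\,F'(y)\big[H(y)-\tfrac{\rho}{D}\Delta C^*_{t+1}\big]$ and the resulting strict-unimodality argument merely make rigorous a step the paper leaves implicit, namely that the unique root of the first-order condition (\ref{u*t}) is indeed the unique \emph{global} minimizer of the non-convex objective (\ref{HJB}); this is a tightening of the same argument rather than a different method.
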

\begin{proof}
Let $y=\frac{p_t(A(t+D))}{D}\in[0,1]$ and define $G_t(y)$ to be
\begin{equation}\notag
    G_t(y):=H(y)-\frac{\rho}{D}\Delta C^*_{t+1}
\end{equation}
where $H(\cdot)$ is defined in (\ref{H(x)}) and $\Delta C^*_{t+1}$ is in (\ref{delta_C}). According to (\ref{u*t}), if there exists $y^*\in[0,1]$ such that 
\[G_t(y^*)=0\]
at any time $t$, its corresponding $p^*_t(A(t+D))$ is the optimal pricing to (\ref{HJB}).

To verify the existence of $y^*$, we next check the two endpoints $G_t(0)$ and $G_t(1)$. As $H(0)=0$ and the two cost functions satisfy \[C_{t+1}^*\big(\mathbf{S}^{1}_{t+1}\big)>C_{t+1}^*(\bar{\mathbf{S}}_{t+1})\] 
by Proposition \ref{C_increas}, we can obtain $G_t(0)<0$. 

If $G_t(1)\geq 0$, i.e., $H(1)\geq \frac{\rho}{D}\Delta C^*_{t+1}$, there exists a solution of $y$ in $[0,1]$. 
Under Assumption \ref{assumption1}, $H(y)$ is an increasing function, and thus $G_t(y)$ increase with $y$ on $[0,1]$, which means the solution is unique. 

If $G_t(1)<0$ instead, there is no solution in $[0,1]$, but we can still uniquely set
\[p_t^*(A(t+D))=D\] 
to hurdle the large foreseen AoI. 
\end{proof} 

We can generalize our Proposition \ref{uniquesol_prop} to fit some other distributions such as truncated normal distribution. 
\begin{corollary} \label{corol:distribution}
If there exists a cutoff point $x_{th}\in[0,1)$, such that $H(x)<0$ for $x\in [0, x_{th}]$ and $H(x)$ increases with $x\in[x_{th},1]$, which holds for truncated normal distribution, then the optimal pricing solution to problem (\ref{HJB}) exists and is unique. 
\end{corollary}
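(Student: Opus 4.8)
The plan is to reuse the auxiliary function $G_t(y):=H(y)-\frac{\rho}{D}\Delta C^*_{t+1}$ introduced in the proof of Proposition \ref{uniquesol_prop}, whose zeros on $[0,1]$ correspond, via $y=p_t(A(t+D))/D$, to the stationary points of the first-order condition (\ref{u*t}), and to show that even without the global monotonicity of $H$ assumed in Assumption \ref{assumption1}, $G_t$ still admits at most one sign change on $[0,1]$, pinning down a unique minimizer of (\ref{HJB}). First I would record, exactly as in Proposition \ref{uniquesol_prop}, that $\Delta C^*_{t+1}>0$ by Proposition \ref{C_increas}: the non-updated state $\mathbf{S}_{t+1}^{1}=(A(t+D)+1,1)$ carries strictly larger foreseen AoI than the reset state $\bar{\mathbf{S}}_{t+1}=(D,1)$ since $A(t+D)\ge D$. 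Hence the constant $c:=\frac{\rho}{D}\Delta C^*_{t+1}$ is strictly positive.

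The argument then splits at the cutoff $x_{th}$. On $[0,x_{th}]$ the hypothesis $H(y)<0$ together with $c>0$ gives $G_t(y)=H(y)-c<0$ throughout, so (\ref{u*t}) has no root here; equivalently, the derivative of the $y$-dependent part of the objective in (\ref{HJB}) shares the sign of $G_t(y)$ up to a strictly positive multiplicative factor (the arrival probability times $D$ times $F'(y)>0$), so the objective is \emph{strictly decreasing} on this subinterval, which already rules out any local minimizer in $[0,x_{th})$. On $[x_{th},1]$ the hypothesis makes $H$ strictly increasing, hence $G_t$ is strictly increasing and changes sign at most once. Combining the two pieces, $G_t$ is negative on all of $[0,x_{th}]$ and monotone increasing afterwards, so the objective is strictly decreasing and then at most strictly increasing, i.e.\ unimodal on $[0,1]$.

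Existence and uniqueness then follow by a dichotomy on the right endpoint, mirroring Proposition \ref{uniquesol_prop}. If $G_t(1)\ge 0$, then since $G_t(x_{th})<0$ and $G_t$ is strictly increasing on $[x_{th},1]$, the intermediate value theorem yields a single root $y^*\in(x_{th},1]$, before which the objective decreases and after which it increases, so $p_t^*(A(t+D))=Dy^*$ is the unique global minimizer. If instead $G_t(1)<0$, then $G_t<0$ on all of $[0,1]$, the objective is strictly decreasing throughout, and the unique optimum is the corner $p_t^*(A(t+D))=D$, again as in Proposition \ref{uniquesol_prop}.

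The step I expect to require the most care is not the root-finding but the passage from ``unique stationary point'' to ``unique global minimizer'', since problem (\ref{HJB}) is non-convex and, unlike under Assumption \ref{assumption1}, $H$ is no longer monotone on the whole of $[0,1]$. The clean resolution is to track the sign of the objective's derivative rather than to solve (\ref{u*t}) directly: because that derivative agrees in sign with $G_t(y)$ up to a positive factor, the strict negativity of $G_t$ on $[0,x_{th}]$ guarantees that the troublesome non-monotone region contributes no spurious interior minimum, and the single sign change of $G_t$ on $[x_{th},1]$ produces exactly one global minimum (or the corner at $y=1$). This reduces Corollary \ref{corol:distribution} to the same endpoint analysis already used for Proposition \ref{uniquesol_prop}.
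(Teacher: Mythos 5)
Your proposal is correct and follows essentially the same route the paper intends for this corollary: reuse $G_t(y)=H(y)-\frac{\rho}{D}\Delta C^*_{t+1}$ from Proposition \ref{uniquesol_prop}, note $\Delta C^*_{t+1}>0$ via Proposition \ref{C_increas} so that $G_t<0$ on $[0,x_{th}]$, use monotonicity of $H$ on $[x_{th},1]$ to get at most one sign change, and close with the same dichotomy on $G_t(1)$ (interior root versus the corner $p_t^*=D$). Your extra step of tracking the sign of the objective's derivative to promote the unique stationary point to a unique global minimizer is a welcome tightening of an argument the paper leaves implicit, but it does not constitute a different approach.
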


Based on Propositions \ref{C_increas}, \ref{uniquesol_prop} and the expression (\ref{u*t}), we can also derive the monotonicity of optimal online pricing $p_t^*(A(t+D))$ with respect to $A(t+D)$ \cite{puterman2014markov}. 
\begin{corollary}\label{Ath_prop}
The unique optimal online pricing $p_t^*(A(t+D))$ at any time $t\in\{0,\cdots,T-D\}$ increases monotonically with the foreseen AoI $A(t+D)$.
\end{corollary}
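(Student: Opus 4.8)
The plan is to deduce the monotonicity of the pricing from that of the cost-difference $\Delta C_{t+1}^*$ in (\ref{delta_C}), exploiting the closed-form stationarity condition (\ref{u*t}). First I would recast (\ref{u*t}) in the compact form $H\big(y_t^*\big)=\frac{\rho}{D}\Delta C_{t+1}^*$, where $y_t^*:=p_t^*(A(t+D))/D\in[0,1]$ and $H$ is the regular-value function in (\ref{H(x)}). Whenever an interior optimum exists (the case $H(1)\ge\frac{\rho}{D}\Delta C_{t+1}^*$ in the proof of Proposition \ref{uniquesol_prop}), Assumption \ref{assumption1} makes $H$ strictly increasing and hence invertible on $[0,1]$, so $y_t^*=H^{-1}\big(\frac{\rho}{D}\Delta C_{t+1}^*\big)$ is a strictly increasing function of $\Delta C_{t+1}^*$. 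Since $p_t^*(A(t+D))=D\,y_t^*$, it then suffices to show that $\Delta C_{t+1}^*$ increases with the foreseen AoI $A(t+D)$.

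The crux is a structural remark about the two next-slot states entering (\ref{delta_C}) through the transitions (\ref{transition_prob}): the post-sampling state $\bar{\mathbf{S}}_{t+1}=(D,1)$ has its AoI reset to the constant $D$ and is therefore \emph{independent} of the current AoI $A(t+D)$, whereas the ``arrival-but-no-update'' state $\mathbf{S}_{t+1}^{1}=(A(t+D)+1,1)$ carries an AoI strictly increasing in $A(t+D)$. Crucially, both next-states share the same arrival coordinate $s(t)=1$, so comparing them only exercises the AoI argument. Taking two foreseen ages $A^a(t+D)>A^b(t+D)$, the term $C_{t+1}^*(\bar{\mathbf{S}}_{t+1})$ is identical for both, while Proposition \ref{C_increas} applied to the states $(A^a(t+D)+1,\,1)$ and $(A^b(t+D)+1,\,1)$ gives $C_{t+1}^*\big(A^a(t+D)+1,\,1\big)>C_{t+1}^*\big(A^b(t+D)+1,\,1\big)$. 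Subtracting the common term yields $\Delta C_{t+1}^{*,a}>\Delta C_{t+1}^{*,b}$, and composing with the increasing map $H^{-1}$ from the first step delivers $p_t^*(A^a(t+D))>p_t^*(A^b(t+D))$.

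Finally I would handle the boundary regime to complete the argument. When no interior root exists (the case $H(1)<\frac{\rho}{D}\Delta C_{t+1}^*$), the solution saturates at $p_t^*(A(t+D))=D$; since this is the largest admissible price and the monotone growth of $\Delta C_{t+1}^*$ only pushes $\frac{\rho}{D}\Delta C_{t+1}^*$ further beyond $H(1)$ as $A(t+D)$ grows, the price stays pinned at $D$ and monotonicity is preserved non-strictly on this range. The main obstacle I anticipate is not the algebra but pinning down the threshold in $A(t+D)$ at which the optimum switches from the interior branch to the saturated branch, and checking that the pricing does not jump downward across this switch; this reduces to verifying that the interior branch approaches $D$ continuously as $\frac{\rho}{D}\Delta C_{t+1}^*\uparrow H(1)$, which follows from continuity of $H^{-1}$ at $H(1)$. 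A secondary point worth stating carefully is that Proposition \ref{C_increas} is invoked with the arrival coordinate held fixed at $s=1$, so one should confirm that its induction is genuinely a statement of monotonicity in the AoI coordinate alone.
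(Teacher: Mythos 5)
Your proposal is correct and follows exactly the route the paper intends: the paper gives no explicit proof of Corollary~\ref{Ath_prop}, asserting only that it follows from Propositions~\ref{C_increas}, \ref{uniquesol_prop} and expression~(\ref{u*t}), and your argument---monotonicity of $\Delta C_{t+1}^*$ in $A(t+D)$ via Proposition~\ref{C_increas} (the reset state $(D,1)$ being independent of $A(t+D)$), composed with the increasing inverse $H^{-1}$ from Assumption~\ref{assumption1}, plus the saturated branch $p_t^*=D$---is precisely that reasoning made rigorous. Your closing caveats (the interior-to-saturation switch and the fixed arrival coordinate $s=1$ in Proposition~\ref{C_increas}) are valid refinements rather than deviations.
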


Corollary \ref{Ath_prop} tells that as the foreseen AoI is large to dominate in the long-term cost objective (\ref{HJB}), we should set a high price to immediately motivate drivers to sample and control the AoI evolution.

\section{Linear-time algorithm for optimal online pricing}\label{section4}

\begin{figure}[t]
    \centering
    \subfigure[Original state transitions from $t=0$ to $t=2$.]{
    \label{inductive_D}
    \includegraphics[width=0.45\textwidth]{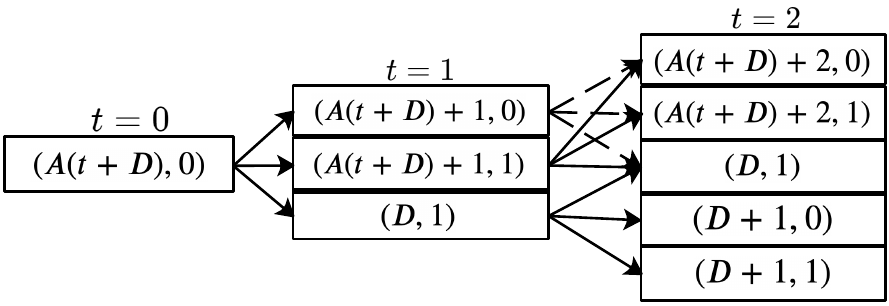}}
    \subfigure[Look-up table to further simplify the state space above.]{
    \label{inductive_compute}
    \includegraphics[width=0.45\textwidth]{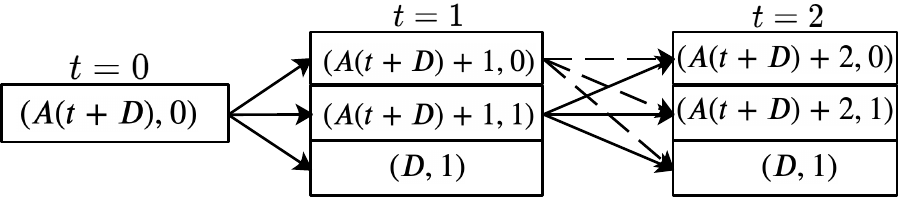}}
    \caption{An illustrative example of $T-D=2$ for the look-up table approach to reduce state space.}
\end{figure}

Recall that backward induction can be used to solve problem (\ref{HJB}) in polynomial time. In this section, we exploit the dynamic feature of AoI to jointly use a fixed look-up table and backward induction to further reduce the increasing number of cost functions in the time domain. Furthermore, we show our designed algorithm is also applicable to infinite time horizon.

According to (\ref{u*t}), at any time $t$ we first need to compute long-term cost functions $C^*_{t+1}(\mathbf{S}_{t+1}^{1})$ and $C^*_{t+1}\big(\bar{\mathbf{S}}_{t+1}\big)$ before solving optimal pricing $p_t^*(A(t+D))$. To obtain $C^*_{t+1}\big(\mathbf{S}_{t+1}^{1}\big)$ for example, as discussed at the end of Section \ref{section2}, we further need to apply iteration computation to derive another three cost functions $C^*_{t+2}\big(\mathbf{S}_{t+2}^{0}\big)$, $C^*_{t+2}\big(\mathbf{S}_{t+2}^{1}\big)$ and $C^*_{t+2}\big(\bar{\mathbf{S}}_{t+2}\big)$ at time $t+2$. After careful examination of this branching network, we find out that a large number of cost functions repeatedly appear due to the linearly increasing feature of AoI over time. Take the transition of $(A(t+D),0)$ at time $t=0$ with $T-D=2$ as an illustrative example, which implies three decision slots $t=0,1,2$. Fig. \ref{inductive_D} shows the polynomial increase of state space for iterative computation from $t=0$ to $t=2$. At time $t=1$, $(A(t+D)+1,0)$ with no driver arrival during last time slot $t=0$ (i.e., $s(0)=0$) and $(A(t+D)+1,1)$ with last arrival have the same branches to $(A(t+D)+2,0),(A(t+D)+2,1)$ and $(D,1)$ at $t=2$. As there are actually $5$ states at time $t=2$, we can apply backward induction as \cite{bellman1966dynamic} and \cite{zhou2019collaborative} to derive the optimal online pricing at current time $t$ in polynomial time.

Moreover, as inspired by \cite{bemporad2002explicit} and \cite{wang2009fast}, we want to further simplify the state space by applying the look-up table approach. As shown in Fig. \ref{inductive_D}, we find the state $\bar{\mathbf{S}}_t=(D,1)$ appears at each time slot $t\in \{0, \cdots, T-D\}$ as it is possible to find the driver arrival to sample and update information at any time. Since the inputs $D$ and $1$ of these functions are constant, we do not need to wait till our observation of such constants to decide the pricing online. Instead, we propose to compute any $C^*_t(\bar{\mathbf{S}}_t)$ before $t=0$ and store them in a fixed look-up table $\mathbf{C}^*(\bar{\mathbf{S}})$ for any online use later. Then we no longer need to expand the state space from $\bar{\mathbf{S}}_t=(D,1)$ at any time $t$. As illustrated in Fig. \ref{inductive_compute}, by jointly applying the look-up table, we end up with only three cost functions to compute online at both $t=1$ and $t=2$. Thus, we successfully reduce the number of cost functions to compute online from quadratic term $(T-D)^2$ to linear item $3(T-D)$ at each time $t$.

\begin{algorithm}[t]
\caption{The computation of look-up table $\mathbf{C}^*(\bar{\mathbf{S}})$ by backward induction}
\label{lookup_table}
\begin{algorithmic}[1]
\STATE Store $C_{T-D}^*(D,1)=D$ in $\mathbf{C}^*(\bar{\mathbf{S}})$table;\ \label{algo1_step1}
\FOR{$i=1$ to $i=T-D$} \label{algo1_step2}
\STATE Compute $C_{T-D}^*(D+i,1)$ and $C_{T-D}^*(D+i,0)$ according to (\ref{VT});\ \label{algo1_step3}
\FOR{$j=1$ to $i$} \label{algo1_step4}
\STATE Obtain $C_{T-D-j+1}^*(D,1)$ from $\mathbf{C}^*(\bar{\mathbf{S}})$ table;\
\STATE Compute $p_{T-D-j}^*(D+i-j)$ according to (\ref{u*t}) using binary search with error $\varepsilon$;\
\STATE Compute $C_{T-D-j}^*(D+i-j,1)$ and $C_{T-D-j}^*(D+i-j,0)$ according to (\ref{HJB}) from last decision slot;\ 
\ENDFOR \label{algo1_step8}
\STATE Store $C_{T-D-i}^*(D,1)$ in $\mathbf{C}^*(\bar{\mathbf{S}})$ table;\ \label{algo1_step9}
\ENDFOR \label{algo1_step10}
\RETURN $\mathbf{C}^*(\bar{\mathbf{S}})$ table
\end{algorithmic}
\end{algorithm}

\begin{algorithm}[t]
\caption{The linear-time computation of online optimal pricing $p_t^*(A(t+D))$ to (\ref{HJB}) at any time $t\in\{0,\cdots,T-D\}$}
\label{backward_computation}
\begin{algorithmic}[1]
\STATE \textbf{Input: $\mathbf{C}^*(\bar{\mathbf{S}})$ table, binary search error $\varepsilon$};\ \label{input_table}
\STATE Observe $A(t)$ and predict foreseen AoI $A(t+D)$;\ 
\STATE Compute $C_{T-D}^*\big(\mathbf{S}_{T-D}^{0}\big)$ and $C_{T-D}^*\big(\mathbf{S}_{T-D}^{1}\big)$ according to (\ref{VT});\ \label{algo2_step3}
\FOR{$1\leq i\leq T-D-t-1$} \label{algo2_step4}
\STATE Obtain $C_{T-D-i}^*(\bar{\mathbf{S}}_{T-D-i})$ from $\mathbf{C}^*(\bar{\mathbf{S}})$ table;\ \label{step_table}
\STATE Compute $C_{T-D-i}^*\big(\mathbf{S}_{T-D-i}^{0}\big),C_{T-D-i}^*\big(\mathbf{S}_{T-D-i}^{1}\big)$ according to (\ref{HJB}) from last decision slot $T-D-i+1$;\ \label{cluster_alg2}
\STATE Compute $p_{T-D-1-i}^*\big(A(t+D)+T-D-1-i\big)$ according to (\ref{u*t}) using binary search with error $\varepsilon$;\
\label{search_error}
\ENDFOR \label{algo2_step9}
\RETURN $p^*_t(A(t+D))$ for current time $t$ \label{online_2}
\end{algorithmic}
\end{algorithm}

Thanks to the look-up table approach that successfully reduces the polynomially increasing number of cost functions to linear number, we are ready to propose Algorithm \ref{lookup_table} to compute the look-up table $\mathbf{C}^*(\bar{\mathbf{S}})$ and Algorithm \ref{backward_computation} to optimally solve problem (\ref{HJB}) for online pricing at any time $t=\{0,\cdots,T-D\}$. 

In Algorithm \ref{lookup_table}, we apply backward induction to calculate cost functions $C^*_t(D,1)$ from $t=T-D$ to $t=0$:
\begin{itemize}
    \item In step \ref{algo1_step1}, we first obtain the value of $C_{T-D}^*(D,1)$ in time slot $T-D$ according to (\ref{VT}).
    \item From step \ref{algo1_step2} to \ref{algo1_step10}, we use the loop to calculate $C_t^*(D,1)$ backward from time slot $t=T-D-1$ to $t=0$, and store each value in the $\mathbf{C}^*(\bar{\mathbf{S}})$ table in step \ref{algo1_step9}.
    \item From step \ref{algo1_step4} to \ref{algo1_step8}, to calculate each $C^*_{T-D-i}(D,1)$ in the $i$-th loop, we calculate all the possible long-term cost functions backward from time slot $t=T-D$ to $t=T-D-i+1$, according to Fig. \ref{inductive_D}.
\end{itemize}
Note that here we use the state tuple $(D,1)$ in $C^*_t(\cdot)$ to distinguish from the state space $\bar{\mathbf{S}}_t$ elsewhere. After returning the fixed look-up table $\mathbf{C}^*(\bar{\mathbf{S}})$, we are ready to apply it to solve $p^*(A(t+D))$ in Algorithm \ref{backward_computation}:
\begin{itemize}
    \item In step \ref{algo2_step3}, we first obtain the known value of $C_{T-D}^*(\mathbf{S}^{0}_{T-D})$ and $C_{T-D}^*(\mathbf{S}^{1}_{T-D})$ in time slot $T-D$ according to (\ref{VT}), where $\mathbf{S}^{0}_{T-D}$ and $\mathbf{S}^{1}_{T-D}$ are $(A(D)+T-D,0)$ and $(A(D)+T-D,1)$, respectively.
    \item From step \ref{algo2_step4} to \ref{algo2_step9}, we calculate all the possible long-term cost functions and the corresponding optimal price backward from time slot $T-D-t-1$ to $t+1$.
    \item We finally obtain the optimal price $p^*_t(A(t+D))$ during the last loop $i=T-D-t-1$. 
\end{itemize}
As the pricing action space $[0,D]$ is continuous, we equally discretize it into $(\frac{D}{\varepsilon}+1)$ many points with $\varepsilon$ as the gap. Here $\varepsilon$ can also be viewed as the error of binary search to find the optimal online pricing $p_t^*$ according to (\ref{u*t}).

Based on the above analysis, we propose the following theorem and further analyze the complexity of Algorithm \ref{backward_computation}.
\begin{theorem}\label{thm}
Algorithm \ref{backward_computation} returns the optimal online pricing $p_t^*(A(t+D))$ with the linear complexity $O\big((T-D)\log_2\big(\frac{D}{\varepsilon}\big)\big)$ at any time $t=\{0,...,T-D\}$, where $\varepsilon$ is the binary search error in step \ref{search_error} of Algorithm \ref{backward_computation}.
\end{theorem}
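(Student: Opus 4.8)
The plan is to establish the two assertions of Theorem~\ref{thm} separately: first the \emph{correctness} that Algorithm~\ref{backward_computation} returns the optimal price $p_t^*(A(t+D))$ to problem~(\ref{HJB}), and then the \emph{linear complexity} bound. I would begin with correctness, since the complexity count is routine once correctness is secured.

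For correctness, I would first argue that the precomputed look-up table is exact, i.e., each entry produced by Algorithm~\ref{lookup_table} equals the true optimal cost-to-go $C_t^*(\bar{\mathbf{S}}_t)=C_t^*(D,1)$ for every $t\in\{0,\dots,T-D\}$. This follows by backward induction: the terminal entry $C_{T-D}^*(D,1)=D$ is exact by~(\ref{VT}), and each earlier entry is obtained by applying the Bellman recursion~(\ref{HJB}) together with the first-order condition~(\ref{u*t}), whose root is the unique optimal price by Proposition~\ref{uniquesol_prop}. The crucial structural fact I would invoke is the Markov property: because the state $\bar{\mathbf{S}}_t=(D,1)$ recurs at every slot whenever a driver samples (Fig.~\ref{inductive_D}), its optimal cost-to-go depends only on the time index $t$ and not on the branch through which the state is reached. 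Hence precomputing $C_t^*(D,1)$ once and reusing it in place of re-expanding the subtree rooted at $(D,1)$ incurs no loss of optimality.

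Given the exact table, I would then show Algorithm~\ref{backward_computation} correctly solves~(\ref{HJB}) at the current slot $t$. Starting from the terminal values $C_{T-D}^*(\mathbf{S}_{T-D}^{0})$ and $C_{T-D}^*(\mathbf{S}_{T-D}^{1})$ in~(\ref{VT}), the loop performs backward induction down to slot $t+1$; at each slot it needs only the three successor cost functions $C^*(\mathbf{S}^{0})$, $C^*(\mathbf{S}^{1})$ and $C^*(\bar{\mathbf{S}})$, the last of which is read off the table rather than recomputed (Fig.~\ref{inductive_compute}). At each slot, solving~(\ref{u*t}) by binary search returns the unique optimal price by Proposition~\ref{uniquesol_prop}, and feeding these prices back into~(\ref{HJB}) produces the optimal cost functions one slot earlier; the final loop iteration therefore returns the optimal $p_t^*(A(t+D))$.

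The complexity bound is then a direct count. The loop in Algorithm~\ref{backward_computation} runs over $i$ up to $T-D-t-1=O(T-D)$ iterations. Within each iteration, a constant number (three) of cost functions are evaluated by closed-form substitution into~(\ref{HJB}), and one optimal price is found by binary search over the discretized action set $[0,D]$, which costs $O(\log_2(\frac{D}{\varepsilon}))$ by the analysis following~(\ref{delta_C}). Multiplying the $O(T-D)$ iterations by the $O(\log_2(\frac{D}{\varepsilon}))$ per-iteration cost yields the total $O((T-D)\log_2(\frac{D}{\varepsilon}))$, linear in the horizon $T-D$. The main obstacle I anticipate is not the counting but the correctness of the look-up table substitution: I must carefully justify, via the principle of optimality, that collapsing every recurrence of the state $(D,1)$ into a single precomputed value leaves the optimal policy and cost unchanged, thereby reducing the naive quadratic $(T-D)^2$ number of online cost evaluations to the linear $3(T-D)$ without sacrificing optimality.
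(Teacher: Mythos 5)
Your proposal is correct and follows essentially the same route as the paper: the complexity bound comes from the identical count of $3(T-D)$ online cost evaluations (enabled by the precomputed look-up table for the recurring state $(D,1)$, whose offline cost is excluded) times the $O\big(\log_2\big(\frac{D}{\varepsilon}\big)\big)$ binary search per price via (\ref{u*t}). Your explicit justification of the table substitution through the Markov property and Proposition \ref{uniquesol_prop} is sound and in fact spells out the correctness argument that the paper only treats informally in the discussion around Fig.~\ref{inductive_compute} before stating the theorem.
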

\begin{proof}

\begin{figure}[t]
    \centering
    \subfigure[Evolution of actual AoI under the optimal online pricing $p_t^*(A(t+D)) , t\in \{0,\cdots,T=30\}$.]{
    \includegraphics[width=0.4\textwidth]{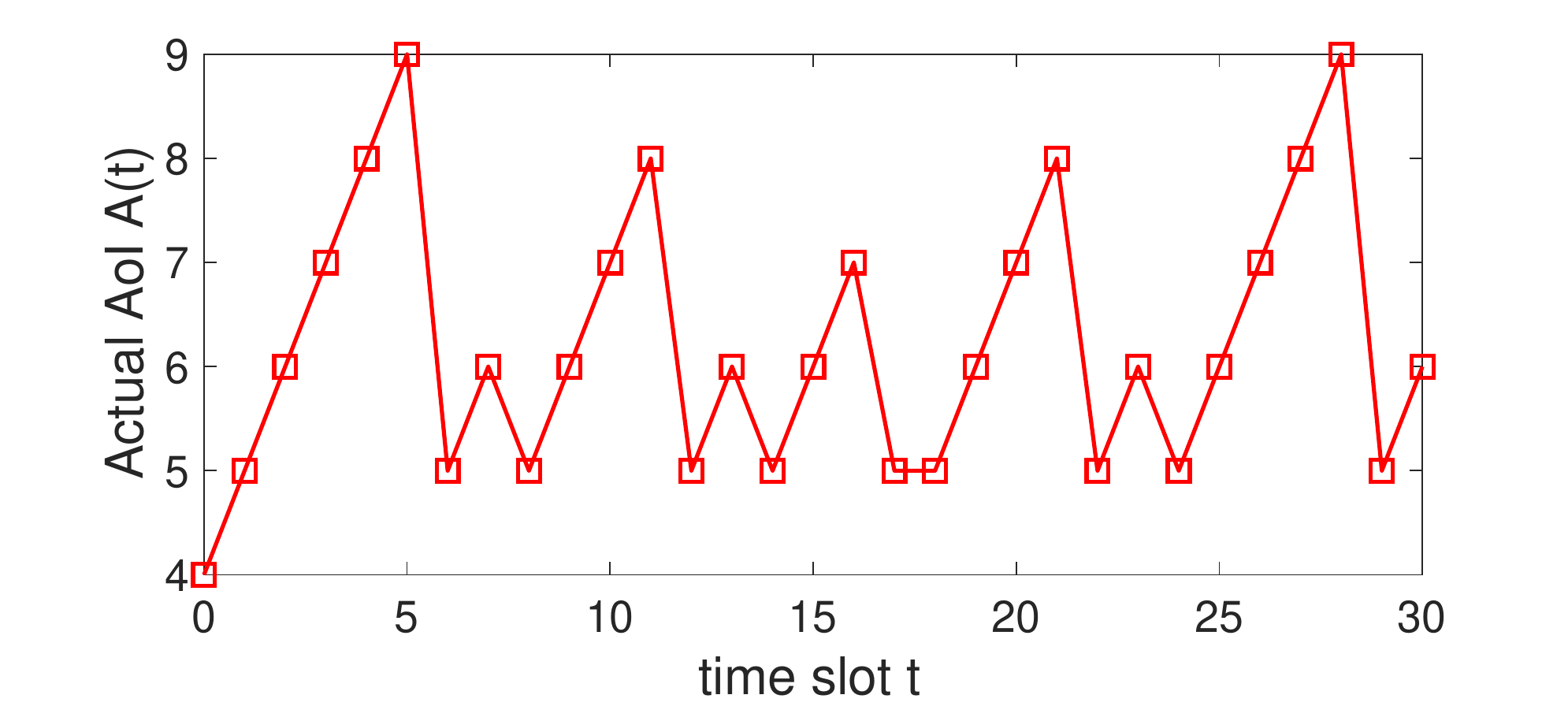}}
    \subfigure[Online optimal pricing policy returned by Algorithm \ref{backward_computation}.]{
    \includegraphics[width=0.4\textwidth]{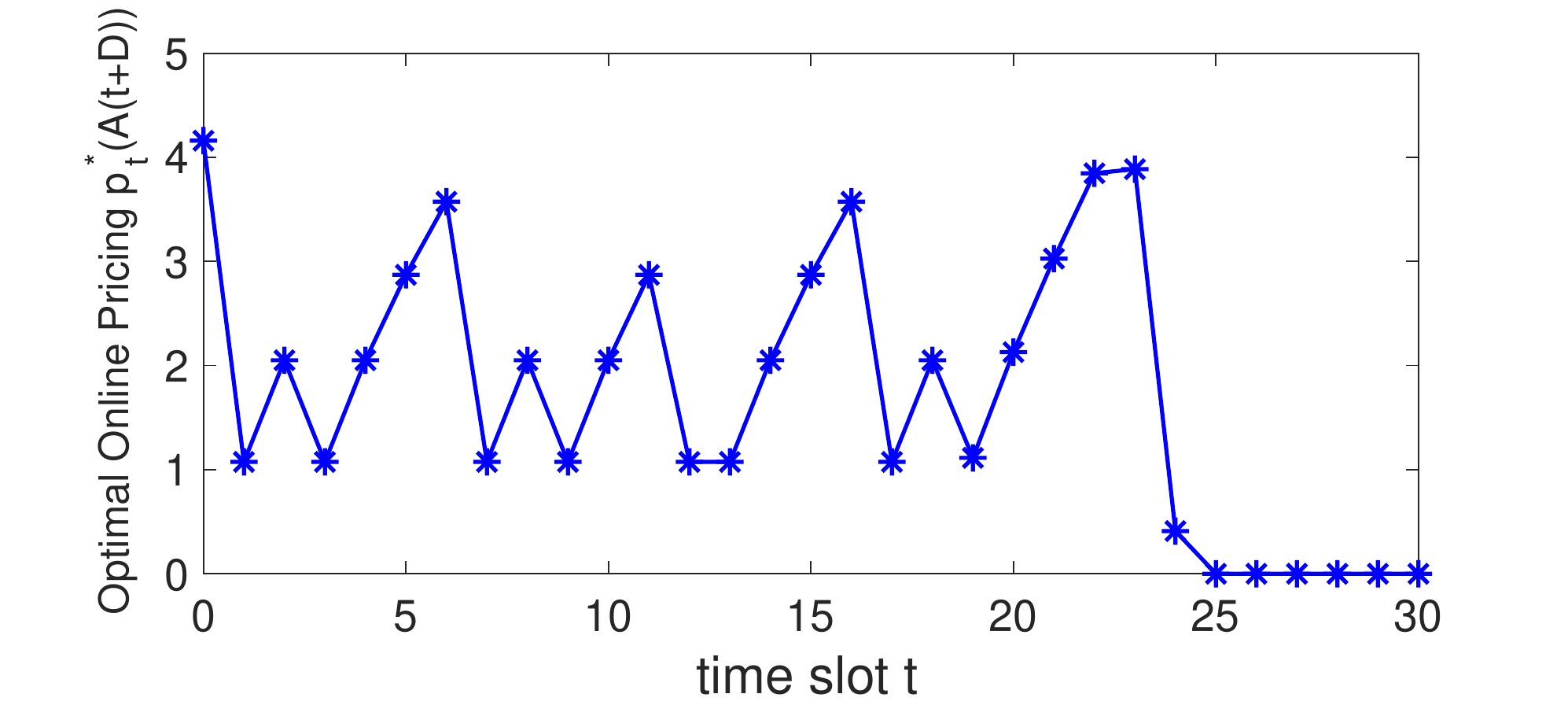}}
    \caption{The evolution of actual AoI $A(t)$ and online optimal price $p^*_t(A(t+D))$ returned by Algorithm \ref{backward_computation} from $t=0$ to $30$. Here we set $A(0)=4, T=30, D=5,\rho=0.85$, probabilities $\alpha=0.8,\beta=0.6$ for the Markov chain in Fig. \ref{event}, and each driver's cost CDF $F(x)$ satisfies a truncated normal distribution with mean $0.6$ and variance $0.7$.}
    \label{pricing_fig}
\end{figure}

As mentioned above, the joint use of backward induction and look-up table reduces the number of cost function to $3(T-D)$ for each time slot, and it takes $\log_2\big(\frac{D}{\varepsilon}\big)$ complexity to obtain each price solution to the cost function by using binary search of (\ref{u*t}). Thus, it takes $O\big((T-D)\log_2\big(\frac{D}{\varepsilon}\big)\big)$ complexity to obtain the optimal pricing for any particular time $t$. 

Note that as the look-up table is computed offline before $t=0$, there is no need to take its complexity into consideration. Then we can say that Algorithm \ref{backward_computation} costs linear-time to calculate the online optimal pricing at any time $t$. 
\end{proof}

Note that storing any other cost functions into a look-up table cannot help to reduce the complexity anymore, because the initial AoI at time $t=0$ is unknown and can be unbounded. Thus, the linear time complexity of Algorithm \ref{backward_computation} is the minimum and cannot be further improved \cite{puterman2014markov,littman2013complexity}.

In the following experiment, we create a typical instance to run the online optimal pricing $p^*_t(A(t+D))$ returned by Algorithm \ref{backward_computation} and corresponding AoI $A(t)$, $\forall t\in\{0, \cdots, T=30\}$ in Fig. \ref{pricing_fig}. Here we set the travel delay $D=5$ for the non-shortest path in Fig. \ref{Single_Path} with initial AoI $A(0)=4$, and the cost distribution of each driver to follow truncated normal distribution with mean $0.6$ and variance $0.7$. Fig. \ref{pricing_fig} shows that the online pricing to announce follows a delayed pattern of the actual AoI observation over time. This is consistent with the monotonicity of the pricing with respect to delayed AoI by $D$ in Corollary \ref{Ath_prop}. We also notice the online pricing converges to zero since $t=T-D$, as any outdated update after $T$ cannot help.

\begin{remark}
Actually, we can revise our algorithm to fit the infinite time horizon $T\rightarrow\infty$, by replacing the $T$ time window by a future finite $T_{th}$ window in Algorithm \ref{lookup_table} and \ref{backward_computation} for efficient computation (with constant complexity) at any time $t$. We can first prove that the optimal cost functions are bounded under $T\rightarrow \infty$, and then prove that the performance error of this approximation algorithm (revised from Algorithm \ref{backward_computation}) exponentially reduces to zero as $T_{th}$ increases.
\end{remark}

\section{Generalization of online pricing to Multi-Path Sampling Scenario}\label{section5}
For ease of exposition, we only consider to sample one non-shortest path in Fig. \ref{Single_Path} and propose a linear algorithm to solve the problem. In this section, we generalize to sample an arbitrary number $N$ of different non-shortest paths and introduce the generalized system model first. Different from the single-path problem, the state space here increases exponentially with time horizon $T$ and path number $N$. To overcome the curse of dimensionality in the spatial path domain (i.e., the path number), we first prove that it is optimal to only price one path at a time, which is not necessarily the path with the largest current AoI. We then propose a new backward-clustered computation method across paths and design an approximation algorithm of polynomial complexity to alternate different paths to price over time.

\subsection{Model extension to sample multiple diverse paths}

Upon arrival at the gateway X by following the Markov chain, a driver now faces an arbitrary number $N$ of non-shortest paths from X to destination point Y for routing. We consider diverse paths such that each path $i\in\{1,\cdots,N\}$ has a unique travel delay $D_i$ and a driver with personalized cost sensitivity $x\sim F(\cdot)$ incurs
\begin{equation*}
    \text{cost}=xD_i 
\end{equation*}
to travel on this path to sample fresh information there. We aim to design online pricing to all the paths to help sample fresh information globally to control the maximum AoI. We summarize the actual AoI evolutions in all the $N$ paths in set
\begin{equation}
    \mathbb{A}(t)=\big\{A_i(t+D_i)|i\in \{1,...,N\}\big\}\label{At_multi}
\end{equation}
to tell the maximum foreseen AoI from time $t$, depending on which path $\hat{i}_{t-1}$ was sent by the last driver (if any) at time $t-1$ to sample. We next see how does $\mathbb{A}(t)$ update for all the paths.

At the beginning of time $t\in \{0, \cdots, T\}$, the provider decides online pricing set
\begin{equation}
    \mathbb{P}\big(\mathbb{A}(t)\big)=\big\{p_{i,t}\big(\mathbb{A}(t)\big)\in[0,D_i]|i\in\{1,2,...,N\}\big\} \label{P_set}
\end{equation}
based on the foreseen AoI set $\mathbb{A}(t)$. Given the online prices to different paths, a driver with cost sensitivity $x$ (if arrives at time $t$) finds path $i$ appealing as long as the price there can justify his travel cost, i.e., his utility of travelling path $i$
\begin{equation}
    U_{i,t}(\mathbb{A}(t))=p_{i,t}\big(\mathbb{A}(t)\big)- x D_i \label{Ut_N}
\end{equation} 
is no less than 0, which depends on the pricing set $\mathbb{P}\big(\mathbb{A}(t)\big)$ and the driver's private travel cost sensitivity $x$. If he finds multiple paths appealing, he will optimally accept the path $\hat{i}_t$ with maximum net payoff, i.e.,
\begin{equation}
    \hat{i}_{t}=\begin{cases}
    \arg\max_{i\in\{1,\cdots,N\}} U_{i,t}(\mathbb{A}(t)), &\text{if }\max_i U_{i,t}(\mathbb{A}(t))\geq 0,\\
    0,& \text{otherwise}.\label{hat_i}
    \end{cases}
\end{equation}
If there is no driver arrival at time $t$ or this driver does not consider to sample any of the $N$ non-shortest paths, we set $\hat{i}_{t}=0$. Based on (\ref{hat_i}), we update all the paths' next foreseen AoI as:  
\begin{equation}
    A_i(t+D_i+1)=\left\{
    \begin{aligned}
        & D_i, &\text{if }i=\hat{i}_{t},\\
        & A_i(t+D_i)+1, &\text{if }i\neq \hat{i}_{t}.\label{A(i)}
    \end{aligned}
    \right .
\end{equation}

Based on the above model extension, we are ready to apply the MDP formulation as \cite{puterman2014markov} and \cite{hsu2017age} again to formulate the provider's online pricing problem below.
\begin{itemize}
    \item \textbf{States:} We define the state of the new MDP in slot $t$ by $\mathbb{S}_t=(\mathbb{A}(t),\hat{i}_{t-1},s(t-1))$. Similarly, $s(-1)=0$ and $\hat{i}_{-1}=0$ at time $t=0$.
    \item \textbf{Actions:} The action of the MDP in slot $t$ is the price set $\mathbb{P}\big(\mathbb{A}(t)\big)$ defined in (\ref{P_set}).
    \item \textbf{Transition probabilities:} Compared to the state transitions (\ref{transition_prob}) in single-path, the state space of $\mathbb{S}_t$ now increases exponentially with $N$, because every path has probability to be sampled at each time slot. Denote the probability that path $i$ is accepted by an arrived driver to sample (i.e., $i=\hat{i}$) by $q_i(t)$. The state $\mathbb{S}_{t+1}=(\mathbb{A}(t+1),\hat{i}_{t},s(t))$ at time slot $t+1$ can change into
    \begin{equation}
    \!\!\!\!\!\!\!\!\!
        \begin{cases}
            \mathbb{S}_{t+1}^{a}=(\mathbb{A}(t)+1,0,0),&\text{with probability } \\ & \mathbf{Q}_a(t)=1-\mathbb{E}[s(t)|s(t-1)],\\
            \mathbb{S}_{t+1}^{b}=(\mathbb{A}(t)+1,0,1),&\text{with probability }\mathbf{Q}_b(t)=\\ & \mathbb{E}[s(t)|s(t-1)]\big(1-\sum_{i=1}^Nq_i(t)\big),\\
            \mathbb{S}_{t+1}^{i}=(\mathbb{A}(t+1),i,1),&\text{with probability}\\ & \mathbf{Q}_i(t)=\mathbb{E}[s(t)|s(t-1)]q_i(t),
        \end{cases}\label{transition_prob_M}
    \end{equation}
    where the dynamics of $\mathbb{A}(t+1)$ is defined in (\ref{A(i)}). There are totally $N+2$ outcomes: no current arrival, arrival to not sample, and arrival to sample path $i\in\{1,\cdots,N\}$.
    \item \textbf{Cost:} Let $V\big(\mathbb{S}_t,\mathbb{P}\big(\mathbb{A}(t)\big)\big)$ be the immediate cost of the MDP if action $\mathbb{P}\big(\mathbb{A}(t)\big)$ is taken in slot $t$ under state $\mathbb{S}_t$, which concerns about the maximum foreseen AoI among all the paths as \cite{kadota2018scheduling} and \cite{he2016optimal}:
    \begin{equation}
    \!\!\!\!\!\!\!\!\!
        V\big(\mathbb{S}_t,\mathbb{P}\big(\mathbb{A}(t)\big)\big) =\max_i\big\{A_i(t+D_i)\big\}+\sum_{i=1}^N\mathbf{Q}_i(t) p_{i,t}(\mathbb{A}(t)).\label{cost_V_N}
    \end{equation}
\end{itemize}

Thus, we extent problem (\ref{MDP_formulation}) by considering multiple paths: at the beginning of time slot $t$, 
\begin{equation}
    \begin{aligned}
        &C_t^*\big(\mathbb{S}_t\big) =\min_{\mathbb{P}\big(\mathbb{A}(t)\big)} \sum_{\tau=t}^{T} \rho^{\tau-t} V\Big(\mathbb{S}_t,\mathbb{P}\big(\mathbb{A}(t)\big)\Big),\\
        &\quad \quad\quad \quad\ s.t.\quad (\ref{Est}),(\ref{At_multi})-(\ref{A(i)})\ \text{and}\ (\ref{cost_V_N}).
    \end{aligned}\label{MDP_formulation_M}
\end{equation}
Similar to (\ref{HJB}), (\ref{MDP_formulation_M}) can be rewritten as 
\begin{equation}
\begin{aligned}
   &C_t^*\big(\mathbb{S}_t\big)=\min_{\mathbb{P}(\mathbb{A}(t))}V\Big(\mathbb{S}_t,\mathbb{P}\big(\mathbb{A}(t)\big)\Big) +\rho\mathbb{E}_{\mathbb{S}_{t+1}}\big[C^*_{t+1}\big(\mathbb{S}_{t+1}\big)\big],
\end{aligned}\label{HJB_M}
\end{equation}
where the cost-to-go is
\begin{equation*}
    \begin{aligned}
        \mathbb{E}_{\mathbb{S}_{t+1}}\big[C^*_{t+1}\big(\mathbb{S}_{t+1}\big)\big]=&\mathbf{Q}_a(t) C_{t+1}^*\big(\mathbb{S}_{t+1}^{a}\big)+\mathbf{Q}_b(t) C_{t+1}^*\big(\mathbb{S}_{t+1}^{b}\big)\\&+\sum_{i=1}^N\mathbf{Q}_i(t) C_{t+1}^*\big(\mathbb{S}_{t+1}^{i}\big)
    \end{aligned}
\end{equation*}
according to transition probabilities (\ref{transition_prob_M}).
Note that drivers' random choice model in $q_i(t)$ among $N$ paths incurs totally $(N+2)^{T-\max_i\{D_i\}}$ cost functions to iteratively (\ref{HJB_M}) much more difficult than (\ref{HJB}).

Existing works (e.g.,\cite{sun2018age,kadota2018scheduling,bedewy2019age}) also formulate dynamic program to design optimal scheduling policies for multi-channel network to minimize peak or average weighted AoI. However, the system space of actions only linearly increases in the path number $N$ and time horizon $T$. However, in our problem (\ref{HJB_M}), the system space increases exponentially. In the following, we first reduce multi-path pricing to single-path at a time and then propose our backward-clustered computation to design low-complexity algorithm.

\subsection{Reducing multi-path pricing to single-path at a time}
Note that Propositions \ref{C_increas} and \ref{uniquesol_prop} can be extended for this multi-path model and ensure existence and uniqueness of the online pricing solution to problem (\ref{HJB_M}). Here we need to overcome the curse of dimensionality in both the path number $N$ and time horizon $T$. In this subsection, we prove that we can reduce the $N$ spatial searching dimensions greatly by reducing multi-path pricing to single-path pricing at each time $t$. 
\begin{proposition}\label{m_opt}
To solve problem (\ref{HJB_M}), it is optimal to only price path $\hat{i}_t^*$ out of $N$ paths at each time $t$, where
\begin{equation}
    \hat{i}^*_t=\arg\max_{i\in \{1,\cdots,N\}}\{A_i(t+D_i)\} \label{i_hat_opt}
\end{equation}
\end{proposition}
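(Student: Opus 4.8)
The plan is to prove the proposition in two stages: a comparison lemma asserting that, whenever a driver is induced to sample, resetting the path carrying the largest foreseen AoI is the most valuable single reset; and then an exchange argument on the price profile that uses this lemma to rule out pricing any other path. Two preliminary facts would be used throughout. First, I would record the multi-path analogue of Proposition~\ref{C_increas}: the optimal cost $C_t^*(\mathbb{S}_t)$ of (\ref{HJB_M}) is non-decreasing in each coordinate $A_i(t+D_i)$ of $\mathbb{A}(t)$, proven by the same joint mathematical/backward induction. Second, the dynamics (\ref{A(i)}) together with the reset level $D_i$ force $A_i(t+D_i)\ge D_i$ for every path at every time, so sampling path $i$ replaces the coordinate $A_i(t+D_i)$ by $D_i\le A_i(t+D_i)$ and can only lower it. I would also note that the leading term $\max_i\{A_i(t+D_i)\}$ of the immediate cost (\ref{cost_V_N}) is action-independent (it is fixed by the state, since the current action only affects AoI after the delay), so the price profile influences the objective only through the bounded payment $\sum_i\mathbf{Q}_i(t)\,p_{i,t}$ and through which coordinate is reset.

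The crux is the following comparison at a fixed time $t$. Let $\hat{i}_t^*$ attain $\max_i\{A_i(t+D_i)\}$ as in (\ref{i_hat_opt}) and let $j\neq\hat{i}_t^*$. Denote by $\mathbf{a}$ the foreseen-AoI vector produced by (\ref{A(i)}) when $\hat{i}_t^*$ is sampled and by $\mathbf{b}$ the vector produced when $j$ is sampled; they coincide except that $\mathbf{a}$ carries $D_{\hat{i}_t^*}$ in slot $\hat{i}_t^*$ and $A_j(t+D_j)+1$ in slot $j$, whereas $\mathbf{b}$ carries $A_{\hat{i}_t^*}(t+D_{\hat{i}_t^*})+1$ in slot $\hat{i}_t^*$ and $D_j$ in slot $j$. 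Since $A_{\hat{i}_t^*}(t+D_{\hat{i}_t^*})\ge A_k(t+D_k)$ for all $k$ and each $A_k\ge D_k$, every coordinate of $\mathbf{a}$ is at most $A_{\hat{i}_t^*}(t+D_{\hat{i}_t^*})+1$, a value that $\mathbf{b}$ already attains in slot $\hat{i}_t^*$; hence the max-AoI term is no larger at $\mathbf{a}$ than at $\mathbf{b}$. I would then upgrade this to the continuation inequality $C_{t+1}^*(\mathbf{a},\cdot)\le C_{t+1}^*(\mathbf{b},\cdot)$ by backward induction on $t$, showing that the Bellman operator in (\ref{HJB_M}) preserves the pairwise-swap property ``resetting the larger-AoI coordinate dominates resetting the smaller one.'' This inductive step is the main obstacle: $\mathbf{a}$ and $\mathbf{b}$ are \emph{not} ordered coordinate-wise --- slot $j$ is larger under $\mathbf{a}$ while slot $\hat{i}_t^*$ is larger under $\mathbf{b}$ --- so the coordinate-monotonicity above does not by itself yield the inequality. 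I would close the gap with a sample-path coupling that feeds both systems identical future arrivals, cost sensitivities and induced prices while each re-samples its current maximum coordinate; the enabling observation is that the max-AoI-first rule discharges the larger coordinate of $\mathbf{b}$ (namely $\hat{i}_t^*$) immediately, so the transient disadvantage of $\mathbf{a}$ on slot $j$ never becomes the running maximum, and discounting preserves the resulting period-by-period inequality.

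Granting the lemma, the proposition follows by an exchange argument on the profile $\mathbb{P}(\mathbb{A}(t))$ of (\ref{P_set}). By the driver's selection rule (\ref{hat_i}), an arriving driver samples at most the single path of highest net payoff, so every profile induces a distribution over which one coordinate is reset. Taking an optimal profile and supposing $p_{j,t}>0$ for some $j\neq\hat{i}_t^*$, I would re-route the associated sampling onto $\hat{i}_t^*$: by the lemma each reset of $j$ is replaced by a reset of $\hat{i}_t^*$ whose continuation cost is no larger, while the extra payment incurred is bounded by $\max_i D_i$. Because the max-AoI term is unbounded whereas the payment term is bounded, in the operative regime where the running AoI exceeds the delays the continuation gain from resetting the maximal coordinate dominates this bounded payment gap, so the objective does not increase; hence at the optimum $p_{j,t}=0$ for all $j\neq\hat{i}_t^*$ and only $\hat{i}_t^*$ is priced. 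Its scalar reward is then the single-path solution of the first-order condition (\ref{u*t}) specialized to $\hat{i}_t^*$, with existence and uniqueness inherited from the multi-path extension of Proposition~\ref{uniquesol_prop}. This also explains why the selected index is the delay-adjusted $\arg\max_i A_i(t+D_i)$ rather than the raw current AoI, since the latter ignores the travel delays $D_i$ that set both the decision horizon and the reset levels.
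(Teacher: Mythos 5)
Your first stage --- comparing a reset of the path with maximum foreseen AoI against a reset of any other single path --- is essentially the paper's own argument. The paper makes this comparison by giving path $m=\hat{i}_t^*$ the scaled price $\hat{p}_{m,t}=\frac{D_m}{D_n}p_{n,t}$, which equalizes the acceptance probability of the two offers (since a type-$x$ driver accepts path $i$ iff $x\le p_{i,t}/D_i$) and reduces the comparison to the continuation terms in (\ref{two_paths_costfunc}), which are then handled ``by the same induction as Proposition~\ref{C_increas}.'' Your coupling is the same device in probabilistic clothing, with one repair needed: coupling the \emph{raw} prices does not couple the acceptance events, precisely because acceptance depends on the normalized price $p_{i,t}/D_i$ and the two systems are pricing paths with different delays; you must couple the normalized prices, which is exactly what the paper's scaling accomplishes. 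Your explicit acknowledgment that the two post-reset states are not coordinate-wise ordered, and the need for an extra argument there, is if anything more careful than the paper's one-line appeal to induction.

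The genuine gap is in your second stage. To conclude that $p_{j,t}=0$ for all $j\neq\hat{i}_t^*$, you argue that ``the max-AoI term is unbounded whereas the payment term is bounded,'' so the continuation gain dominates ``in the operative regime where the running AoI exceeds the delays.'' This (i) proves the claim only in a regime, while Proposition~\ref{m_opt} is unconditional in the state and the time, and (ii) fails as a domination argument: what must absorb the extra payment (up to $\max_i D_i$) is not the AoI level but the \emph{difference} in continuation costs between resetting $\hat{i}_t^*$ and resetting $j$, and this difference can be arbitrarily small --- for instance when $A_{\hat{i}_t^*}(t+D_{\hat{i}_t^*})$ and $A_j(t+D_j)$ are both huge but within one unit of each other, or when $t$ is near $T-\max_i\{D_i\}$ so that discounted continuation gains vanish while payment differences do not. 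The paper's second step needs no asymptotics: since by (\ref{hat_i}) an arriving driver samples at most one path under any price profile, the provider facing a profile with several positive prices can set every non-target price to zero and simultaneously \emph{lower} the target path's price so that the driver's probability of accepting the target path is unchanged; this saves expected payment outright, and combined with the first-stage comparison (re-routing any reset onto the path in (\ref{i_hat_opt}) does not increase continuation cost) it yields the conclusion state by state. Replacing your large-AoI domination step with this revealed-preference exchange is what your proposal is missing.
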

\begin{proof}
We can first prove that updating path $\hat{i}^*_t$ only is better at any time $t$ by comparing its cost function in (\ref{HJB_M}) with any other single path's cost function. 

Denote the cost functions of pricing single path $m$ and single path $n$ by $C_{m,t}\big(\mathbb{S}(t)\big)$ and $C_{n,t}\big(\mathbb{S}(t)\big)$, respectively, where $m=\hat{i}^*_t$ and $m\neq n$, such that the foreseen AoI of the two paths satisfy
\[
    A_m(t+D_m)>A_n(t+D_n).
\]
Denote their corresponding optimal prices as $p_{m,t}(\mathbb{A}(t))$ and $p_{n,t}(\mathbb{A}(t))$, respectively.

We consider another non-optimal policy
\[
    \hat{p}_{m,t}(\mathbb{A}(t))=\frac{D_m}{D_n}p_{n,t}(\mathbb{A}(t))
\]
for path $m$, then the corresponding non-optimal cost function satisfies
\[
    \hat{C}_{m,t}\big(\mathbb{S}(t)\big)\geq C_{m,t}\big(\mathbb{S}(t)\big).
\]
Note that $p_{n,t}(\mathbb{A}(t)) \in[0,D_n]$, such that the non-optimal price $\hat{p}_{m,t}(\mathbb{A}(t))$ will not exceed $D_m$.

Then we can compare the two cost functions of updating single path $m$ and path $n$ 
\begin{equation}
    \begin{aligned}
        &C_{m,t}\big(\mathbb{S}(t)\big)-C_{n,t}\big(\mathbb{S}(t)\big)\\
        \leq &\hat{C}_{m,t}\big(\mathbb{S}(t)\big)-C_{n,t}\big(\mathbb{S}(t)\big)\\
        =&\rho (1-\mathbf{Q}_n(t))\Big(C_{m,t+1}\big(\mathbb{S}_{t+1}^{b}\big)-C_{n,t+1}\big(\mathbb{S}_{t+1}^{b}\big)\Big)\\ & +\rho \mathbf{Q}_n(t)\Big(C_{m,t+1}\big(\mathbb{S}^{m}_{t+1}\big)-C_{n,t+1}\big(\mathbb{S}^{n}_{t+1}\big)\Big),
    \end{aligned}\label{two_paths_costfunc}
\end{equation}
where $\mathbb{S}^{b}_{t+1}$ and $\mathbb{S}^{i}_{t+1}$ are defined in (\ref{transition_prob_M}). Here we can use the same backward and mathematical induction methods as the proof of Lemma \ref{C_increas} to obtain
\begin{align*}
    &C_{m,t+1}\big(\mathbb{S}^{m}_{t+1}\big)\leq C_{n,t+1}\big(\mathbb{S}^{n}_{t+1}\big)
\end{align*}
and 
\begin{align*}
    &C_{m,t+1}\big(\mathbb{S}^{b}_{t+1}\big)\leq C_{n,t+1}\big(\mathbb{S}^{b}_{t+1}\big).
\end{align*}
Substituting the above two inequalities into (\ref{two_paths_costfunc}), we can finally obtain
\[
    C_{m,t}\big(\mathbb{S}_t\big)<C_{n,t}\big(\mathbb{S}_t\big),
\]
which means that updating the single path $m=\hat{i}^*_t$ (\ref{i_hat_opt}) is better than updating any other single path.

If there are multiple paths with appealing prices for a driver arrival at time $t$ (i.e., in the first case of (\ref{hat_i})), he can only choose one to sample. Then the provider can simultaneously reduce the other non-target paths' prices to zero and properly reduce the target path's price to sample, while keeping the driver's acceptance probability to sample the target path unchanged. This helps save the expected economic payment to the driver and reduce the cost function. Hence, we prove that only pricing path $\hat{i}^*_{t}$ is optimal to solve (\ref{HJB_M}).
\end{proof}

Proposition \ref{m_opt} shows that it is optimal to only price a single path at a time. Intuitively, if positive prices are given to more than one path, the driver will still choose only one to sample. Then the provider need to give higher price for the target path given the competitive prices from others, which incurs unnecessary cost. 

Further, Proposition \ref{m_opt} tells that we should target at the path with the maximum foreseen AoI $\max_i \{A_i(t+D_i)\}$ instead of largest current AoI $\max_i \{A_i(t)\}$. This result is different from \cite{sun2018age} and \cite{kadota2018scheduling}, because the current pricing decision in our problem can only help reduce the AoI after a path-dependent time delay. 

Thanks to Proposition \ref{m_opt}, we only need to design the positive price  $p_{\hat{i}^*_t,t}\big(\mathbb{A}(t)\big)$ for the target path $\hat{i}^*_{t}$ in (\ref{i_hat_opt}) instead of $\mathbb{P}\big(\mathbb{A}(t)\big)$ at each time $t$. Then the cost function in (\ref{cost_V_N}) is simplified to:
\begin{equation}
    V\Big(\mathbb{S}_t,p_{\hat{i}_t^*,t}\big(\mathbb{A}(t)\big)\Big) =A_{\hat{i}_t^*}(t+D_{\hat{i}_t})+\mathbf{Q}_{\hat{i}_t^*}(t)p_{\hat{i}_t^*,t}\big(\mathbb{A}(t)\big).\label{cost_V_N_app}
\end{equation}
Accordingly, (\ref{HJB_M}) can also be simplified as 

\begin{figure}[t]
    \centering
    \includegraphics[width=0.44\textwidth]{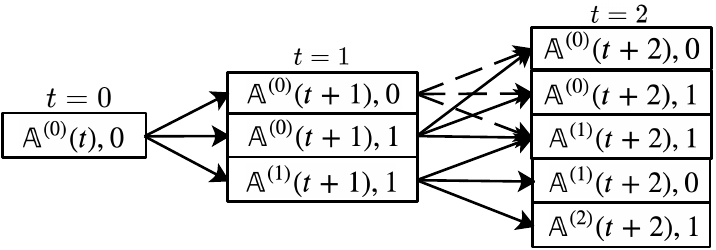}
    \caption{An illustrative example of approximated backward-clustered process with three decision slots ($t=0,1,2$ under $T-\max_i\{D_i\}=2$) in the multi-path scenario.}
    \label{combine_branches}
\end{figure}

\begin{equation}
\begin{aligned}
   &C_t^*\big(\mathbb{S}_t\big)=\min_{p_{\hat{i}_t^*,t}(\mathbb{A}(t))}V\Big(\mathbb{S}_t,p_{\hat{i}_t^*,t}\big(\mathbb{A}(t)\big)\Big) +\rho\mathbb{E}_{\mathbb{S}_{t+1}}\big[C^*_{t+1}\big(\mathbb{S}_{t+1}\big)\big],
\end{aligned}\label{HJB_simp}
\end{equation}
where the cost-to-go is
\begin{equation}
    \begin{aligned}
        \mathbb{E}_{\mathbb{S}_{t+1}}\big[C^*_{t+1}\big(\mathbb{S}_{t+1}\big)\big]=&\mathbf{Q}_a(t) C_{t+1}^*\big(\mathbb{S}_{t+1}^{a}\big)+\mathbf{Q}_b(t) C_{t+1}^*\big(\mathbb{S}_{t+1}^{b}\big)\\&+\mathbf{Q}_{\hat{i}^*_t}(t) C_{t+1}^*\big(\mathbb{S}_{t+1}^{\hat{i}_t^*}\big) \label{S_t+1_M}
    \end{aligned}
\end{equation}
by letting $q_i(t)=0$ for all $i\in\{0,\cdots,N\}$ and $i\neq \hat{i}^*_t$ in (\ref{transition_prob_M}). Though the total number of cost functions has been greatly reduced from $(N+2)^{T-\max_i\{D_i\}}$ to $3^{T-\max_i\{D_i\}}$ in problem (\ref{HJB_simp}), it is still difficult to solve, and the direct application of backward induction approach in Section \ref{section4} cannot help. Here we follow a random pattern to alternate and traverse the target path ($\hat{i}_{t}$) to price over time, and we still need to adapt pricing online to a huge AoI evolution set $\mathbb{A}(t)$ with an exponentially increasing state space in $T$. 

\subsection{Backward-clustered computation for low-complexity approximation algorithm}

In this subsection, we propose a new innovative simplification approach called backward-clustered computation method to solve problem (\ref{HJB_simp}). Different from the repeated long-term cost functions under the same sampled paths we analyzed in Fig. \ref{inductive_D}, the huge number of cost functions across paths are non-repeated themselves. To reduce the exponentially increasing state space, at current time $t$, we cluster those states with the same number of future sampled paths from time $t$ to be one approximated state. By applying this approximation for clustered functions, we no longer count and compute all the long-term cost functions, and we do not need to consider $\hat{i}^*_t$ any more.

More specifically, at current time $t$, we propose to use $\mathbb{A}^{(0)}(t)$ and $C^{(0)}_{t}\big(\mathbb{A}^{(0)}(t),s(t-1)\big)$ to approximate foreseen AoI evolution set $\mathbb{A}(t)$ and its cost function $C^*_t\big(\mathbb{S}_t\big)$ in (\ref{HJB_simp}), respectively, where the superscript $0$ means the number of the sampled paths from time $t$ is still $0$ at time $t$. Note that the initial approximated AoI set $\mathbb{A}^{(0)}(t)$ equals $\mathbb{A}(t)$. Then the three possible states at time $t+1$ in (\ref{transition_prob_M}) can be approximated to $(\mathbb{A}^{(0)}(t+1),0), (\mathbb{A}^{(0)}(t+1),1)$ and $(\mathbb{A}^{(1)}(t+1),1)$, respectively. Here only $\mathbb{A}^{(1)}(t+1)$ with superscript $1$ tells that one path is sampled from time $t$ to $t+1$. Take the clustered approximation of $C^*_t(\mathbb{S}_t)$ in (\ref{HJB_simp}) with three decision slots ($t=0,1,2$ under $T-\max_i\{D_i\}=2$) as an illustrative example, and Fig. \ref{combine_branches} shows the approximated branching network from $t=0$ to $2$ as explained below.

\begin{algorithm}[t]
\caption{Polynomial-time backward-clustered computation of approximation online pricing to (\ref{HJB_simp})}
\label{Approx_multipath}
\begin{algorithmic}[1]
\STATE Observe current AoI and predict maximum foreseen AoI set $\mathbb{A}^{(0)}(t)$;\
\STATE Initialize $\hat{i}^{(0)}_t=\arg\max_i\{A_i^{(0)}(t+D_i)\}$ and $k=T-\max_i\{D_i\}-t$;\
\WHILE{$k \geq 1$} \label{algo3_3}
\STATE Initialize $A_i^{(0)}(t+D_i+k)=A_i^{(0)}(t+D_i)+k$ for all $i\in\{1,\cdots,N\}$;\ \label{algo3_4}
\FOR{$0\leq j\leq k$} \label{algo3_5}
\STATE Obtain $\hat{i}^{(j)}_{t+k}=\arg\max_i \{A_i^{(j)}(t+D_i+k)\}$;\
\STATE Compute $p_{\hat{i}^{(j)}_{t+k},t+k}(\mathbb{A}^{(j)}(t+k))$ using binary search with error $\varepsilon$;\ \label{algo3_7}
\STATE Compute $C^{(j)}_{t+k}\big(\mathbb{A}^{(j)}(t+k),1\big)$ and $C^{(j)}_{t+k}\big(\mathbb{A}^{(j)}(t+k),0\big)$ according to (\ref{HJB_simp}) from last decision slot;\ \label{algo3_8}
\STATE Update $A_{\hat{i}^{(j)}_{t+k}}^{(j+1)}(t+D_{\hat{i}^{(j)}_{t+k}}+k)=D_{\hat{i}^{(j)}_{t+k}}+\frac{k-j}{2}$;\ \label{algo3_9}
\ENDFOR \label{algo3_10}
\STATE $k=k-1$
\ENDWHILE \label{algo3_12}
\STATE Apply binary search of $p_{\hat{i}^{(0)}_t,t}(\mathbb{A}^{(0)}(t))$ with error $\varepsilon$;\ \label{algo3_13}
\RETURN $p_{\hat{i}^{(0)}_t,t}(\mathbb{A}^{(0)}(t))$ for current time $t$
\end{algorithmic}
\end{algorithm}

\begin{itemize}
    \item At $t=2$, there are only $3$ approximate AoI sets (i.e., $\mathbb{A}^{(0)}(t+2),\mathbb{A}^{(1)}(t+2)$ and $\mathbb{A}^{(2)}(t+2)$) in Fig. \ref{combine_branches}, because there are at most two paths being sampled since $t=0$, and we only have $5$ approximated states instead of $3^{T-D}=9$.
    \item At $t=1$, our two approximated states with superscript $0$ (i.e., $\big(\mathbb{A}^{(0)}(1),1\big)$ with last arrival and $\big(\mathbb{A}^{(0)}(1),0\big)$ without last arrival) have the same branching structure to $(\mathbb{A}^{(0)}(2),1)$, $\big(\mathbb{A}^{(0)}(2),0\big)$ and $\big(\mathbb{A}^{(1)}(2),1\big)$ at $t=2$. Thus, it suffices to cluster $\big(\mathbb{A}^{(0)}(1),1\big)$ and $\big(\mathbb{A}^{(0)}(1),0\big)$ by only expanding one of them to $t=2$ in Fig. \ref{combine_branches}.
\end{itemize}

Note that the original states $\mathbb{S}^b_{t+1}$ and $\mathbb{S}^{\hat{i}^*_t}_{t+1}$ in (\ref{S_t+1_M}) cannot branch to the same state at time $t+2$, because the path $\hat{i}^*_t$ is sampled during different time slots, i.e., $t=0$ and $t=1$, respectively. In our approximation of backward-clustered computation, the actual AoI of a sampled path is mildly enlarged to the approximated AoI. Yet this error is always bounded by the elapsed time since current time $t$ and will not take effect until this path is priced to sampled again. By clustering the approximated cost functions with the same number of sampled paths from time $t$ to the whole branching network, we successfully reduce the number of long-term cost functions to compute from $3^{T-\max_i\{D_i\}}$ to around $(T-\max_i\{D_i\})^2$, which can be solved using backward induction.

Regarding the pricing solution, similar to (\ref{u*t}), the approximation online pricing $p_{\hat{i}^{(0)}_t,t}\big(\mathbb{A}^{(0)}(t)\big)$ here satisfies the first-order condition of (\ref{HJB_simp}), yet the cost functions are replaced by their approximation and $\hat{i}^{(0)}_t=\max_i\{A^{(0)}_i(t+D_i)\}$. Now we present Algorithm \ref{Approx_multipath} to efficiently return the approximation online pricing to (\ref{HJB_simp}). We apply our backward-clustered computation to calculate approximation cost functions from $t=T-\max_i{D_i}$ to $t=0$:
\begin{itemize}
    \item From step \ref{algo3_3} to step \ref{algo3_12}, we backward compute all the possible long-term cost functions from time slot $T-\max{D_i}$ to $t+1$, and finally apply binary search to derive $p_{\hat{i}^{(0)}_t,t}(\mathbb{A}^{(0)}(t))$ in step \ref{algo3_13}.
    \item At each future time slot $t+k$, we first calculate the approximated foreseen AoI set $\mathbb{A}^{(j)}(t+k)$ for all $j\in\{0,\cdots,k\}$. At $j=0$, all the paths are not sampled, such that $\mathbb{A}^{(0)}(t+k)$ is initialized at step \ref{algo3_4}. Then we choose to update the AoI of path $\hat{i}^{(j)}_{t+k}$ to increase $j$ to $j+1$ at $j$-th loop in step \ref{algo3_9}.
    \item In steps \ref{algo3_7} and \ref{algo3_8}, we compute the corresponding approximation price and cost functions of each AoI set $\mathbb{A}^{(j)}(t+k)$.
\end{itemize}
Note that in step \ref{algo3_9}, we update the AoI of path $\hat{i}_{t+k}^{j}$ to $D_{\hat{i}_{t+k}^{j}}+\frac{k-j}{2}$ because this $j$-th updated path can be sampled at any time from $t+j$ to $t+k$, such that we take the average of all the possible AoI based on our linear model in (\ref{cost_V_N}).

As a counter-part of (\ref{HJB_simp}), we denote $\hat{C}_t\big(\mathbb{A}(t),s(t-1)\big)$ as the approximated cost function under our approximation pricing, and we next prove that it has bounded performance gap to the optimum in the following theorem.
\begin{theorem}\label{Multi_approx_thm}
At any time $t$, our approximation Algorithm \ref{Approx_multipath} takes only polynomial time complexity $O\big((T-\max_i\{D_i\})^2\log_2\big(\frac{\max_i\{D_i\}}{\varepsilon}\big)\big)$ to return the approximation online pricing $p_{\hat{i}^{(0)}_t,t}\big(\mathbb{A}^{(0)}(t)\big)$, and the complexity does not depend on the path number $N$. As compared to the cost optimal objective $C_t^*\big(\mathbb{A}(t)),s(t-1)\big)$ in (\ref{HJB_simp}), Algorithm \ref{Approx_multipath}'s cost objective $\hat{C}_t\big(\mathbb{A}(t)),s(t-1)\big)$ achieves the following approximation error in the worst case:
\begin{equation}
    \begin{aligned}
    &\big|\hat{C}_t\big(\mathbb{A}(t)),s(t-1)\big)-C_t^*\big(\mathbb{A}(t)),s(t-1)\big)\big|\\
    <&\frac{\rho^{g(N)+1}-\rho^{T-\max_i\{D_i\}-t+1}}{(1-\rho)^2},\label{performance_bound}
    \end{aligned}
\end{equation}
and this error increases with $T$ but decreases with $N$.
\end{theorem}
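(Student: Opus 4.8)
The plan is to establish the two claims separately: first the polynomial-time complexity independent of $N$, then the worst-case approximation-error bound. For the complexity I would read off the loop structure of Algorithm \ref{Approx_multipath}. The outer \texttt{while} loop runs over $k$ from $T-\max_i\{D_i\}-t$ down to $1$, and the inner \texttt{for} loop runs over $j$ from $0$ to $k$; hence the total number of (AoI-set, cost-function) pairs computed is $\sum_{k}(k+1)=O\big((T-\max_i\{D_i\})^2\big)$, matching the reduction from $3^{T-\max_i\{D_i\}}$ that the backward-clustered computation achieves. Each such pair requires one binary search over $[0,D_{\hat{i}}]$ in step \ref{algo3_7}, costing $O\big(\log_2(\max_i\{D_i\}/\varepsilon)\big)$ by the same argument as in the proof of Theorem \ref{thm}. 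Multiplying gives the stated complexity. The crucial point to emphasize is that clustering indexes states only by the \emph{number} $j$ of paths sampled so far, not by \emph{which} paths, so neither loop bound nor the per-iteration cost depends on $N$.

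For the error bound I would first isolate the single source of approximation: in step \ref{algo3_9} a freshly sampled path in cluster $j$ is assigned the averaged AoI $D_{\hat{i}}+\frac{k-j}{2}$ rather than its true value, which lies in $[D_{\hat{i}},D_{\hat{i}}+(k-j)]$; hence the per-path AoI deviation is bounded by $\frac{k-j}{2}$. Because the immediate cost in (\ref{cost_V_N_app}) depends on the AoI only through $\max_i\{A_i(t+D_i)\}$, this deviation can enter the cost at a given step only when the just-sampled (and therefore AoI-reset) path is the maximizer, which cannot happen until enough rounds have elapsed for that reset path to overtake the other $N-1$ paths. This is exactly where the dependence on $N$ enters: with only one path sampled per slot by Proposition \ref{m_opt}, a reset path needs at least $g(N)$ additional slots before it can again attain the maximum, and $g(N)$ grows with $N$.

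Next I would set up a recursive bound on $e_t:=\big|\hat{C}_t(\mathbb{A}(t),s(t-1))-C_t^*(\mathbb{A}(t),s(t-1))\big|$. Since both $\hat{C}$ and $C^*$ obey the same Bellman form (\ref{HJB_simp}) with identical transition weights $\mathbf{Q}_a,\mathbf{Q}_b,\mathbf{Q}_{\hat{i}^*}$ (these depend only on $s(t-1)$ and the acceptance probability, not on the AoI approximation), subtracting the two recursions cancels the transition weights and leaves $e_t\le\delta_t+\rho\,e_{t+1}$, where $\delta_t$ is the per-step contribution from the AoI approximation above. Unrolling this linear recursion from $t$ to $T-\max_i\{D_i\}$ produces a discounted sum $\sum_{\tau}\rho^{\tau-t}\delta_\tau$. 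Because each $\delta_\tau$ is itself an accumulation of the deviations $\frac{k-j}{2}$ over the clustered branches (a second, inner geometric-type sum) and no deviation is active before the $g(N)$-slot delay, this double summation collapses into the closed form $\frac{\rho^{g(N)+1}-\rho^{T-\max_i\{D_i\}-t+1}}{(1-\rho)^2}$. The monotonicity then follows: increasing $T$ shrinks the subtracted term $\rho^{T-\max_i\{D_i\}-t+1}$ and thus enlarges the bound, while increasing $N$ enlarges $g(N)$ and shrinks the leading term $\rho^{g(N)+1}$.

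The main obstacle I anticipate is making the $g(N)$-delay argument rigorous: I must show precisely that an over-estimated AoI on a reset path cannot influence the $\max_i$ term (and hence the cost) for at least $g(N)$ slots, and quantify $g(N)$ from the AoI-growth dynamics (\ref{A(i)}) together with the one-path-per-slot constraint. Coupling this delay cleanly with the two nested geometric summations, so that the inner per-step accumulation and the outer Bellman unrolling combine to exactly the factor $\frac{1}{(1-\rho)^2}$ rather than merely $O\big(\frac{1}{(1-\rho)^2}\big)$, is the delicate step, and I would handle it by bounding each cluster's deviation uniformly by its elapsed time and then resumming.
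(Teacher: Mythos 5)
Your complexity argument coincides with the paper's: count the $O\big((T-\max_i\{D_i\})^2\big)$ clustered cost functions from the two nested loops and multiply by the $O\big(\log_2(\max_i\{D_i\}/\varepsilon)\big)$ binary-search cost per function, with $N$-independence following because clustering indexes states only by the number of sampled paths. Your error-bound skeleton is also the paper's: a per-step AoI deviation bounded by the elapsed time, inactive until the $g(N)$-slot re-sampling cycle elapses, unrolled into discounted sums that produce the $(1-\rho)^{-2}$ factor, with monotonicity in $T$ and $N$ read off the closed form.

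There is, however, a genuine gap in how you launch the recursion $e_t\le\delta_t+\rho\,e_{t+1}$. You claim the transition weights $\mathbf{Q}_a,\mathbf{Q}_b,\mathbf{Q}_{\hat{i}^*}$ are identical in the two recursions and cancel under subtraction because they ``do not depend on the AoI approximation.'' This is false: $\mathbf{Q}_{\hat{i}^*_t}(t)$ contains the acceptance probability $F\big(p_{\hat{i}^*_t,t}/D_{\hat{i}^*_t}\big)$, and the approximation price and the optimal price differ precisely because of the AoI approximation, so the two recursions carry different weights. Moreover, $\hat{C}_t\big(\mathbb{A}(t),s(t-1)\big)$ — the realized cost of the approximation policy — is not the minimum of any Bellman operator (its action is fixed), so you cannot use the standard device of evaluating both minimizations at a common action to force the cancellation. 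The paper's proof sidesteps exactly this: it first invokes the over-estimation property $\hat{C}_t\big(\mathbb{A}(t),s(t-1)\big)<C_t^{(0)}\big(\mathbb{A}^{(0)}(t),s(t-1)\big)$, i.e., the clustered cost computed with the enlarged AoI set upper-bounds the true cost under approximation pricing, and only then compares $C_t^{(0)}$ with $C_t^*$ in (\ref{HJB_simp}) — two genuine minimizers — via their first-order conditions, after which the elapsed-time/$g(N)$ summation you describe takes over. To repair your plan you need either this over-estimation step or a performance-difference argument (add and subtract the cost of playing the approximation price for one step and then acting optimally); as written, the subtraction step would fail.
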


\begin{proof}
As illustrated in Fig. \ref{combine_branches}, our approximation of the clustered computation reduces the number of cost functions from $3^{T-\max_i\{D_i\}}$ to only around $(T-\max_i\{D_i\})^2$ for each time $t$. It takes at most complexity $\log_2\big(\frac{\max_i\{D_i\}}{\varepsilon}\big)$ to obtain the approximation pricing for each approximated cost function, by applying binary search of the first-order condition of (\ref{HJB_simp}) with error $\varepsilon$. Thus, it takes complexity $O\big((T-\max_i\{D_i\})^2\log_2\big(\frac{\max_i\{D_i\}}{\varepsilon}\big)\big)$ to obtain the approximation pricing for any particular time $t$. 

Since the actual AoI set is enlarged to the approximated AoI set, the approximated cost function satisfies 
\[
    C_t^{(0)}\big(\mathbb{A}^{(0)}(t),s(t-1)\big)>\hat{C}_t\big(\mathbb{A}(t)),s(t-1)\big)
\]
Then we can use such relationship to compute the approximation error:
\begin{equation}\notag
    \begin{aligned}
        &\hat{C}_t\big(\mathbb{A}(t)),s(t-1)\big)-C_t^*\big(\mathbb{A}(t)),s(t-1)\big)\\
        \leq &C_t^{(0)}\big(\mathbb{A}^{(0)}(t),s(t-1)\big)-C_t^*\big(\mathbb{A}(t)),s(t-1)\big)\\
        \leq& \rho\Big(C_{t+1}^{(1)}\big(\mathbb{A}^{(1)}(t+1),1\big)-C_{t+1}^*\big(\mathbb{A}(t+1,\hat{i}^*_t),1\big)\Big)\\
        \leq& \rho \sum_{n=g(N)}^{T-\max_i\{D_i\}-t}\rho^n n
        \\ =& \frac{\rho^{g(N)+1}-\rho^{T-\max_i\{D_i\}-t+1}}{(1-\rho)^2}.
    \end{aligned}
\end{equation}

\begin{figure}[t]
    \centering
    \subfigure[Evolution of foreseen AoI in all the three paths under the approximation online pricing $p_{\hat{i}_t^{(0)},t}(\mathbb{A}^{(0)}(t)) , t\in \{0,\cdots,T-\max_i{D_i}=25\}$.]{
    \includegraphics[width=0.4\textwidth]{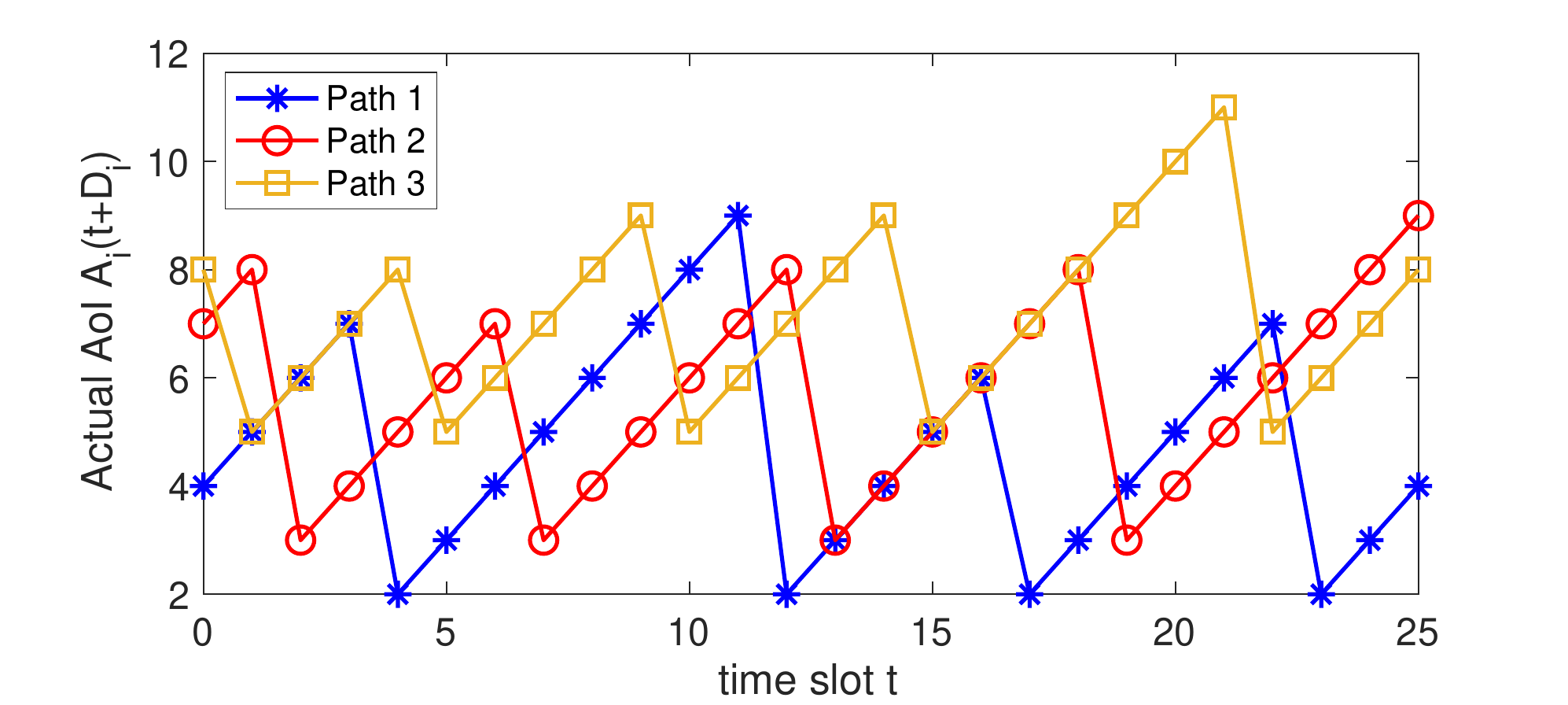}\label{multi_age}}
    \subfigure[Online approximation pricing policy returned by Algorithm \ref{Approx_multipath}.]{
    \includegraphics[width=0.41\textwidth]{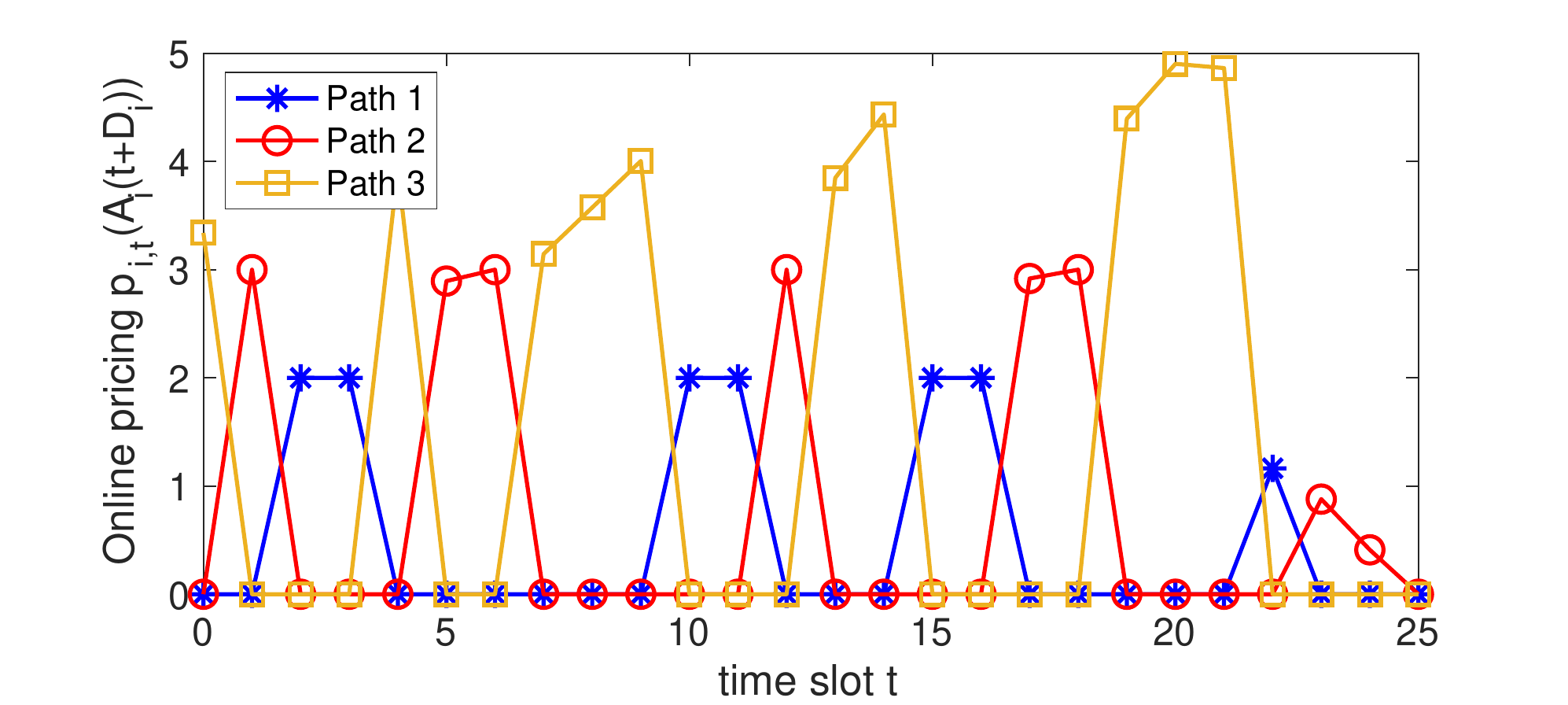}\label{multi_price}}
    \caption{The evolution of foreseen AoI set $\mathbb{A}(t)$ and online approximation price $p_{\hat{i}_t^{(0)},t}(\mathbb{A}^{(0)}(t))$ returned by Algorithm \ref{Approx_multipath} from $t=0$ to $25$. Here we set $\mathbb{A}(0)=\{2,4,3\}, D=\{2,3,5\}$ with $N=3$, $T=30, \rho=0.85$, probabilities $\alpha=0.8,\beta=0.6$ for the Markov chain in Fig. \ref{event}, and the cost CDF $F(x)$ satisfies a truncated normal distribution with mean $0.6$ and variance $0.7$.}
    \label{multi_pricing_fig}
\end{figure}

The second inequality above is obtained using the first-order conditions of the original and approximated cost functions with respect to their prices at time $t$, and the third inequality is because the performance loss at any future time $t+n$ cannot exceed the elapsed time $n$. The function $g(N)$ above is an increasing function of $N$ to return the updating time cycle to sample path $\hat{i}^*_t$ again.
We can check the partial derivatives of the right-hand-side bound of (\ref{performance_bound}) to show that it increases with $T$ but decreases with $N$.
\end{proof}

The complexity of our Algorithm \ref{Approx_multipath} does not depend on path number $N$ and can thus fit large-scale road networks. Actually, the approximation error on the right hand side of (\ref{performance_bound}) is small most of the time. Perhaps surprisingly, it decreases with a larger path number $N$. This is because individual paths become similar to each other in a larger path choice pool, and our backward-clustered computation in Algorithm \ref{Approx_multipath} only counts the number of sampled paths (without checking which paths) for approximating cost functions.   

We create a typical instance but with $N=3$ paths in Fig. \ref{multi_pricing_fig} to run the online approximation pricing $p_{\hat{i}_t^{(0)},t}(\mathbb{A}^{(0)}(t))$ and corresponding foreseen AoI set $\mathbb{A}(t)$, $\forall t\in\{0, \cdots, T=25\}$ returned by Algorithm \ref{Approx_multipath}. Here we set the travel delay set $D=\{2,3,5\}$ for the three paths, whose initial AoI $\mathbb{A}(0)=\{2,4,3\}$, and the cost distributions of each driver to follow a truncated normal distribution with mean $0.6$ and variance $0.7$. 

Fig. \ref{multi_age} shows the evolution of the foreseen AoI of all the three paths, and the maximum AoI among all the paths are well controlled thanks to our online pricing solution in Fig. \ref{multi_price}. We can also see that the platform will select the path with the maximum AoI to price, which is consistent with Proposition \ref{m_opt}. In this case, the online pricing follows a $D_i$-delayed pattern of the actual AoI over time.

\begin{figure}[t]
    \centering
    \subfigure[Actual performance error of truncated normal distribution with mean $0.6$ and variance $0.7$.]{
    \includegraphics[width=0.38\textwidth]{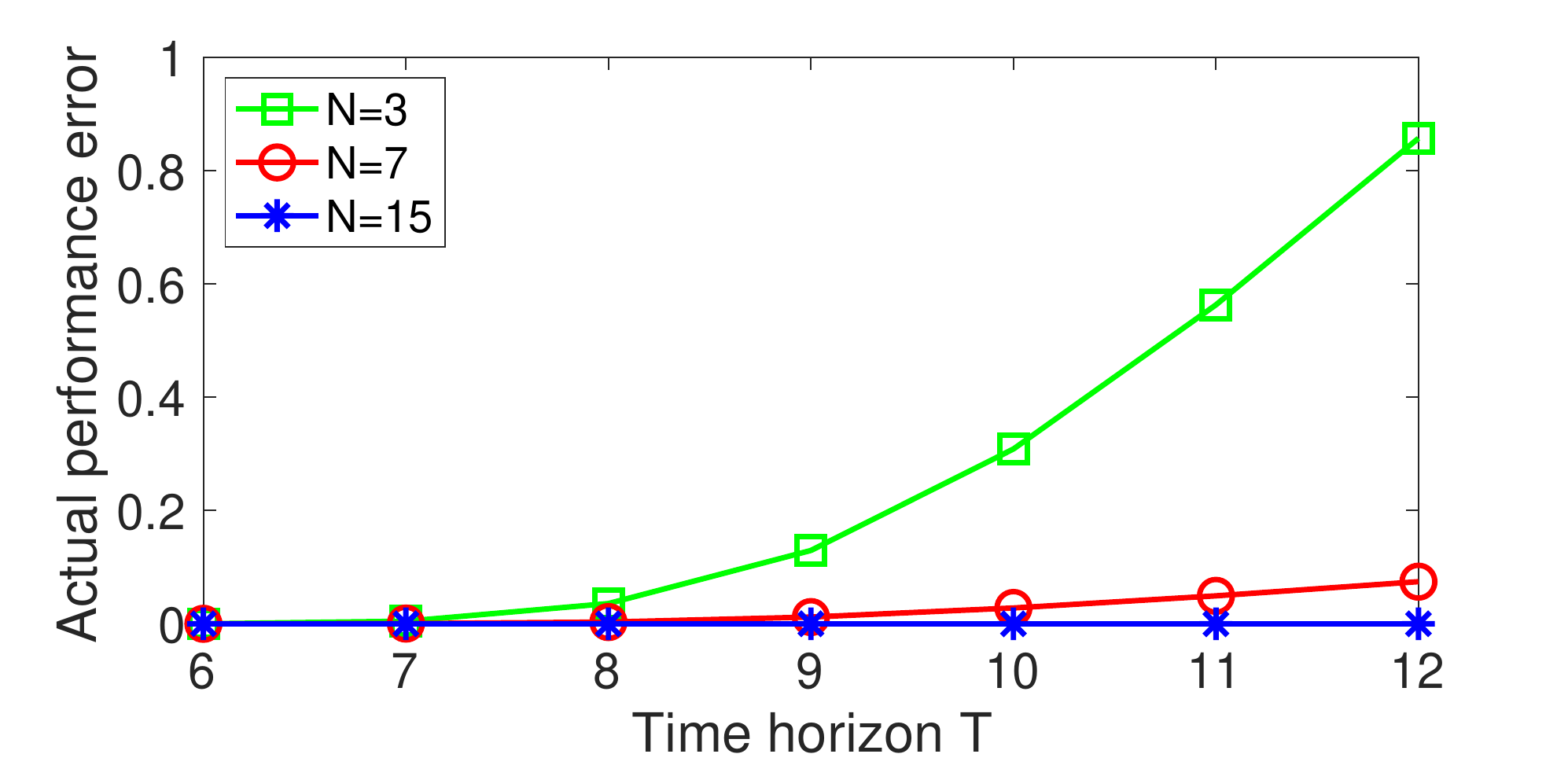}\label{approx_error_norm}}
    \subfigure[Actual performance error of truncated logistic distribution with location parameter 0.6 and scale 1.]{
    \includegraphics[width=0.38\textwidth]{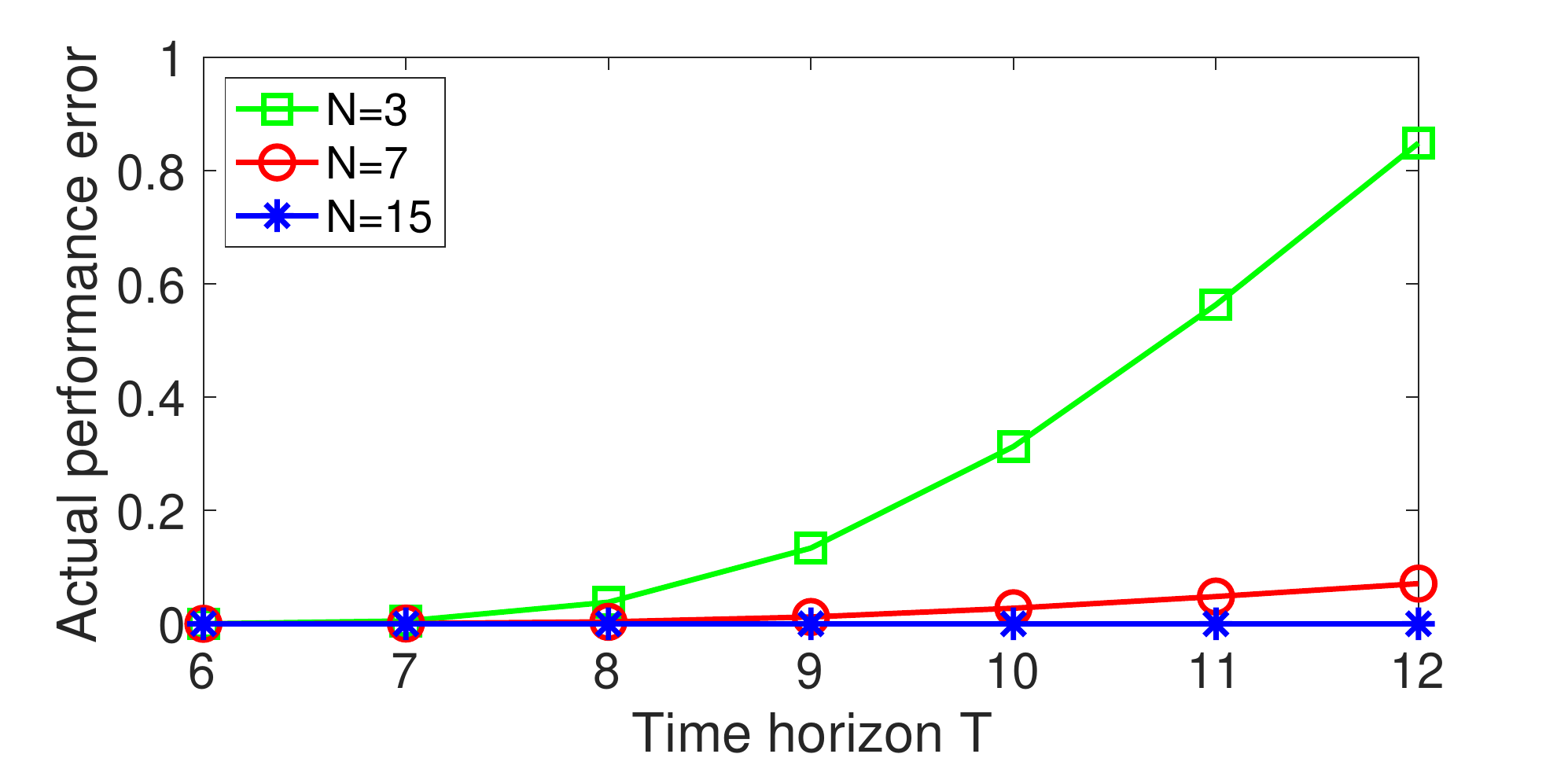}\label{approx_error_logi}}
    \caption{The actual performance loss of our Algorithm \ref{Approx_multipath} versus path number $N$ and time horizon $T$ under truncated normal and truncated logistic distributions. Here we vary the path numbers $N=3,7$ and $15$, time horizon $T=\{6,\cdots,12\}$, and consider the time $t=0$'s online decision making and set maximum travel delay among all paths as $\max_i\{D_i\}=5$.}
    \label{approx_error}
\end{figure}

Finally, unlike (\ref{performance_bound})'s error for the worst-case, we run simulations to examine Algorithm \ref{Approx_multipath}'s actual performance loss as compared to the optimum, versus time horizon $T$ and path number $N$ in Fig. \ref{approx_error}. Here we generally consider initial $t=0$'s online decision making, then the cost objective $\hat{C}_0\big(\mathbb{A}(0)),0\big)$ uses the approximation pricing returned by Algorithm \ref{Approx_multipath}. While the optimal cost function $C_0^*\big(\mathbb{A}(0)),0\big)$ in (\ref{HJB_simp}) uses the optimal pricing returned by traditional iterative computation methods, whose complexity is high and only allows us to keep $T\leq 12$ for simulation. In Fig. \ref{approx_error} with almost the same Fig. \ref{approx_error_norm} and Fig. \ref{approx_error_logi}, we find that effect of $F(x)$ to the performance error is limited, which shows our scalable Algorithm \ref{Approx_multipath}'s robustness to different cost distributions. The performance error is mild from the optimum under our algorithm. This result is also consistent with Theorem \ref{Multi_approx_thm} to show the approximation error increase in time horizon $T$ and decrease in path number $N$.

\section{Conclusion}\label{section6}
In this paper, we have proposed online pricing for a content provider to economically reward drivers for diverse routing and control the actual AoI dynamics over time and spatial path domains. This online pricing optimization problem should be solved without knowing drivers' costs and even arrivals, and is intractable due to the curse of dimensionality in both time and space. If there is only one non-shortest path, we leverage the Markov decision process (MDP) techniques to analyze the problem. Accordingly, we design a linear-time algorithm for returning optimal online pricing, where a higher pricing reward is needed for a larger AoI. We also show that our algorithm is applicable to infinite time horizon. If there are a number of non-shortest paths, we prove that pricing one path at a time is optimal, yet it is not optimal to choose the path with the largest current AoI. Then we propose a new backward-clustered computation method and develop an approximation algorithm to alternate different paths to price over time. This algorithm has only polynomial-time complexity, and its complexity does not depend on the number of paths. Perhaps surprisingly, our analysis of approximation ratio suggests that our algorithm's performance approaches closer to optimum if more paths are involved to sample.

We can consider some possible future works directions to extend this work. Recall that this paper focuses on the min-max foreseen AoI optimization in problem (\ref{HJB_M}) if there are a number of non-shortest paths. We can extend to study the min-sum AoI optimization problem to minimize the total AoI across all the paths. Our backward-clustered computation method should still work and we should still choose to pricing one path at a time. Yet we may not choose the path with the largest foreseen AoI in Proposition \ref{m_opt}.
Moreover, we can also extend our analysis to deal with the random travel delay instead of deterministic delay in each path, where we need to further take expectation in the online optimization. The evaluation about the performance of our algorithms with real data is another point in our next step.


%



\ifCLASSOPTIONcaptionsoff
  \newpage
\fi



%

%


\begin{IEEEbiography}[{\includegraphics[width=1in,height=1.25in,clip,keepaspectratio]{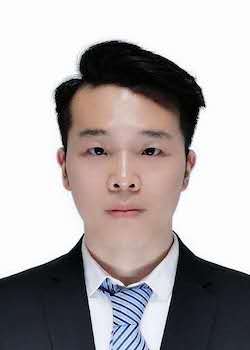}}]{Hongbo Li}
received the B.S. degree in Electronics and Electric Engineering from Shanghai Jiao Tong University, Shanghai, China, in 2019. He is currently working toward the Ph.D. degree with the Pillar of Engineering Systems and Design, Singapore University of Technology and Design (SUTD). His research interests include network economics, game theory, and mechanism design.
\end{IEEEbiography}

\begin{IEEEbiography}[{\includegraphics[width=1in,height=1.25in,clip,keepaspectratio]{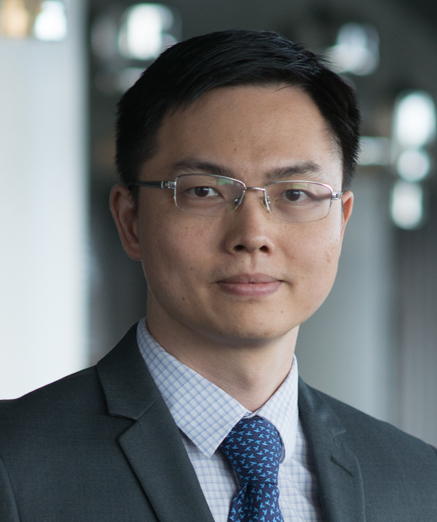}}]{Lingjie Duan}(S’09-M’12-SM’17)
received the Ph.D. degree from The Chinese University of Hong Kong in 2012. He is an Associate Professor of Engineering Systems and Design with the Singapore University of Technology and Design (SUTD). In 2011, he was a Visiting Scholar at University of California at Berkeley, Berkeley,CA, USA. His research interests include network economics and game theory, cognitive and green networks, and energy harvesting wireless communications. He is an Editor of IEEE Transactions on Wireless Communications. He was an Editor of IEEE Communications Surveys and Tutorials. He also served as a Guest Editor of the IEEE Journal on Selected Areas in Communications Special Issue on Human-in-the-Loop Mobile Networks, as well as IEEE Wireless Communications Magazine. He received the SUTD Excellence in Research Award in 2016 and the 10th IEEE ComSoc Asia-Pacific Outstanding Young Researcher Award in 2015.
\end{IEEEbiography}





\end{document}